\documentclass[conference]{IEEEtran}
\IEEEoverridecommandlockouts
\usepackage{cite}
\usepackage{amsmath,amssymb,amsfonts}
\usepackage{algorithmic}
\usepackage{graphicx}
\usepackage{textcomp}
\usepackage{xcolor}
\usepackage{fancyhdr}
\usepackage{tablefootnote}
\usepackage{soul}

\usepackage{subfigure}
\graphicspath{{./figures/}}
\DeclareGraphicsExtensions{.pdf}
\UseRawInputEncoding

\makeatletter
\def\ps@headings{%
\def\@oddhead{\mbox{}\scriptsize\rightmark \hfil \thepage}%
\def\@evenhead{\scriptsize\thepage \hfil \leftmark\mbox{}}%
\def\@oddfoot{}%
\def\@evenfoot{}}
\makeatother
\pagestyle{fancy}

\usepackage{graphicx, amssymb, amsbsy, amsmath, amsthm}
\usepackage{multirow}
\usepackage{threeparttable}
\usepackage{cite,url}

\usepackage{booktabs}
\usepackage{color,soul}
\usepackage[font=footnotesize,labelfont=sf,textfont=sf]{caption}
\usepackage[lined,ruled,linesnumbered]{algorithm2e}
\usepackage{algorithmic}
\usepackage{bbding}
\usepackage[colorlinks,linkcolor=red,anchorcolor=blue,citecolor=blue]{hyperref}

\def\BibTeX{{\rm B\kern-.05em{\sc i\kern-.025em b}\kern-.08em
    T\kern-.1667em\lower.7ex\hbox{E}\kern-.125emX}}
\begin{document}

\newcommand{\Name}{\emph{HarmonyBatch}}
\newcommand{\CostReduce}{$82.9\%$}
\newcommand{\GPUReduce}{$37.0\%$}
\newcommand{\Instance}{ecs.e-c1m1.large}
\newcommand{\red}[1]{\textcolor{red}{#1}}
\newcommand{\blue}[1]{\textcolor{blue}{#1}}

\title{\Name{}\emph{:} Batching multi-SLO DNN Inference with Heterogeneous Serverless Functions}


\author{\IEEEauthorblockN{Jiabin Chen\IEEEauthorrefmark{2}, Fei Xu\IEEEauthorrefmark{2}\IEEEauthorrefmark{1}\thanks{$^{*}$Corresponding author: Fei Xu (\href{mailto:fxu@cs.ecnu.edu.cn}{fxu@cs.ecnu.edu.cn}). This work was supported in part by the NSFC under Grant 62372184, the Science and Technology Commission of Shanghai Municipality under Grant 22DZ2229004, the NSF under Grants OIA-2019511 and OIA-2327452, the Louisiana Board of Regents under Contract LEQSF(2019-22)-RD-A-21, the National Key Research \& Development (R\&D) Plan under Grant 2022YFB4501703, and the Major Key Project of PCL (PCL2022A05).}, Yikun Gu\IEEEauthorrefmark{2}, Li Chen\IEEEauthorrefmark{3}, Fangming Liu\IEEEauthorrefmark{4}, Zhi Zhou\IEEEauthorrefmark{5}}
\IEEEauthorblockA{\IEEEauthorrefmark{2}Shanghai Key Laboratory of Multidimensional Information Processing, East China Normal University.\\
\IEEEauthorrefmark{3}University of Louisiana at Lafayette. 
\IEEEauthorrefmark{4}Peng Cheng Laboratory.
\IEEEauthorrefmark{5}Sun Yat-sen University.\\
Email: \IEEEauthorrefmark{2}\emph{fxu@cs.ecnu.edu.cn}, \IEEEauthorrefmark{2}\emph{li.chen@louisiana.edu}, 
\IEEEauthorrefmark{4}\emph{fangminghk@gmail.com},
\IEEEauthorrefmark{5}\emph{zhouzhi9@mail.sysu.edu.cn}
}}


\maketitle

\begin{abstract}

Deep Neural Network (DNN) inference on serverless functions is gaining prominence due to its potential for substantial budget savings. Existing works on serverless DNN inference solely optimize batching requests from one application with a \emph{single} Service Level Objective (SLO) on CPU functions. However, production serverless DNN inference traces indicate that the request arrival rate of applications is \emph{surprisingly low}, which inevitably causes a long batching time and SLO violations. Hence, there is an urgent need for batching \emph{multiple} DNN inference requests with diverse SLOs (\emph{i.e.,} \emph{multi-SLO} DNN inference) in serverless platforms.
Moreover, the potential performance and cost benefits of deploying \emph{heterogeneous} (\emph{i.e.,} CPU and GPU) functions for DNN inference have received scant attention.

In this paper, we present \Name{}, a cost-efficient resource provisioning framework designed to achieve predictable performance for multi-SLO DNN inference with heterogeneous serverless functions. Specifically, we construct an analytical performance and cost model of DNN inference on both CPU and GPU functions, by explicitly considering the GPU time-slicing scheduling mechanism and request arrival rate distribution. 
Based on such a model, we devise a two-stage merging strategy in \Name{} to judiciously batch the multi-SLO DNN inference requests into application \emph{groups}. It aims to minimize the budget of function provisioning for each application group while guaranteeing diverse performance SLOs of inference applications. 
We have implemented a prototype of \Name{} on Alibaba Cloud Function Compute. Extensive prototype experiments with representative DNN inference workloads demonstrate that \Name{} can provide predictable performance to serverless DNN inference workloads while reducing the monetary cost by up to \CostReduce{} compared to the state-of-the-art methods.

\end{abstract}

\begin{IEEEkeywords}
serverless computing, resource provisioning, DNN inference, SLO guarantee
\end{IEEEkeywords}

\section{Introduction}
\label{sec:introduction}

The rapid evolution of artificial intelligence across diverse fields has elevated Deep Neural Network (DNN) inference to a critical role in cloud-based workloads. Fueled by the bursty nature of DNN inference requests, the fast elasticity of serverless computing makes it compelling for hosting such inference workloads, without heavy server maintenance~\cite{jarachanthan2021amps}. As DNN models grow in complexity, especially the emergence of large language models (LLMs)~\cite{brown2020language}, the computational and memory resource demands of serverless inference services increase sharply~\cite{fu2024serverlessllm}. To meet the stringent Service Level Objectives (SLOs) of DNN inference workloads, Alibaba has recently introduced GPU serverless functions~\cite{Ali_function}. Such a development in GPU functions not only provides opportunities for cost reduction~\cite{gu2023fast} but also brings new challenges in deploying DNN inference on heterogeneous serverless functions~\cite{du2022serverless}.

Upon deploying a DNN model in public clouds, the mainstream serverless DNN inference service (\emph{e.g.,} Amazon SageMaker Serverless Inference\footnote{https://aws.amazon.com/sagemaker}) cannot provide SLO guarantees for users. Users only rely on their own experience to provision CPU/GPU function resources and configure the batch size for their inference workloads. 
While several recent studies guarantee SLOs for DNN inference by either minimizing the performance interference~\cite{xu2022igniter} or optimizing function resource provisioning plan~\cite{ali2020batch}, they mainly focus on the DNN inference scenario with a \emph{single} SLO, which is impractical for the real-world situation of applications with low request arrival rates (typically less than one request per second), as evidenced in Sec.~\ref{sec:motivation-serverless}. Such low request arrival rates inevitably cause a long batching time, which cannot meet the SLO requirements (typically in milliseconds) of DNN inference workloads with large batch sizes.

\begin{figure}[!t]
  	\centerline{
  		\includegraphics[width=0.94\linewidth]{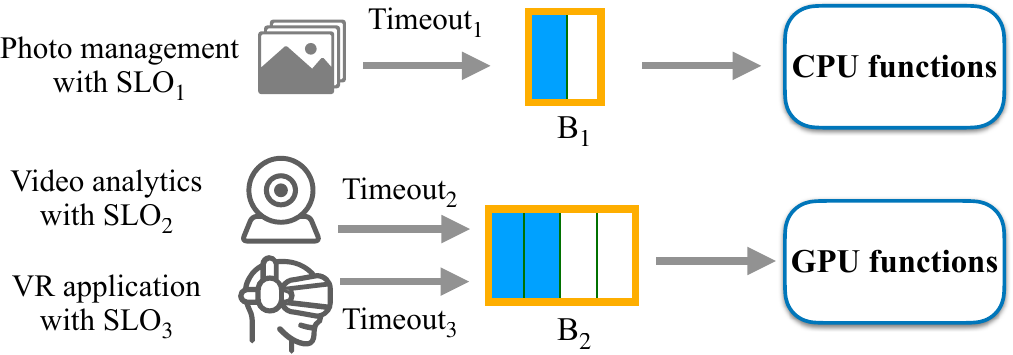}}
	\caption{A scenario of batching multiple DNN inference requests with diverse SLOs (\emph{i.e.,} multi-SLO DNN inference) on heterogeneous serverless functions. $\mathtt{Timeout_{i}}$ denotes the \emph{batching timeout} for an application $i$. $B_{1}$ and $B_{2}$ denote the batch sizes of DNN inference executed on CPU and GPU functions, respectively.}
	\label{motivation-scene}\vspace{-10pt}
\end{figure}

Fortunately, inference applications executed in public clouds can often be \emph{grouped} and each group shares the same DNN model. Such application groups bring users an opportunity to aggregate multiple DNN inference requests with diverse SLOs into large batches. As an example, Hugging Face Inference Endpoints\footnote{https://huggingface.co/inference-endpoints/serverless} provide the serverless inference API to user applications with diverse SLO requirements.
In such a scenario, we can batch the inference requests from video analytics with $\mathtt{SLO_{2}}$ and VR smart assistants~\cite{pei2023asyfunc} with $\mathtt{SLO_{3}}$ on GPU functions and thus obtain a lower monetary cost, as illustrated in Fig.~\ref{motivation-scene}. The rationale is that executing a large inference batch on GPU functions can achieve up to $\GPUReduce{}$ of cost saving as compared to a small batch on CPU functions, as evidenced by Sec.~\ref{sec:motivation-performance}.

However, existing works (\emph{e.g.,} BATCH~\cite{ali2020batch}, INFless~\cite{yang2022infless}) have solely focused on batching requests from one application with a single SLO requirement and optimizing CPU function resources~\cite{cai2023cost}. They cannot be readily applied to achieving multi-SLO serverless DNN inference, due to the two key challenges summarized below:

\noindent$\bullet$ \textbf{\emph{Complex} batching for multi-SLO DNN inference.} 
State-of-the-art inference batching techniques \emph{individually} choose a batch size and a batching timeout value for each application with a single SLO~\cite{ali2020batch, ali2022optimizing}. However, it is complex to batch requests from a set of applications in the multi-SLO DNN inference scenario. This is because it requires categorizing the applications into \emph{groups} cost-efficiently and determining the batching configuration (including the batch size for each group and a batching timeout for each application) without SLO violations. Moreover, it is difficult to estimate the monetary cost of DNN inference in such a multi-SLO scenario.

\noindent$\bullet$ \textbf{\emph{Heterogeneous} function resource provisioning.} 
Significant efforts (\emph{e.g.,} AWS Lambda Power Tuning~\cite{aws-lambda-power-tuning}, COSE~\cite{akhtar2020cose}) have been devoted to provisioning homogeneous CPU functions for guaranteeing performance SLOs of applications, which overlook the potential performance and cost benefits of deploying \emph{heterogeneous} functions. As the performance SLOs and request arrival rates of an inference application vary, the optimal provisioning plan can be \emph{shifted} between CPU and GPU functions, leading to up to $83.0\%$ of budget saving as elaborated in Sec.~\ref{sec:motivation-performance}. Moreover, the heterogeneous function provisioning can be even harder in the multi-SLO DNN inference scenario, which requires the co-optimization of the function resource allocation and application grouping as well as batching configuration.

\begin{figure}[!t]
  	\centerline{
  		\includegraphics[width=1\linewidth]{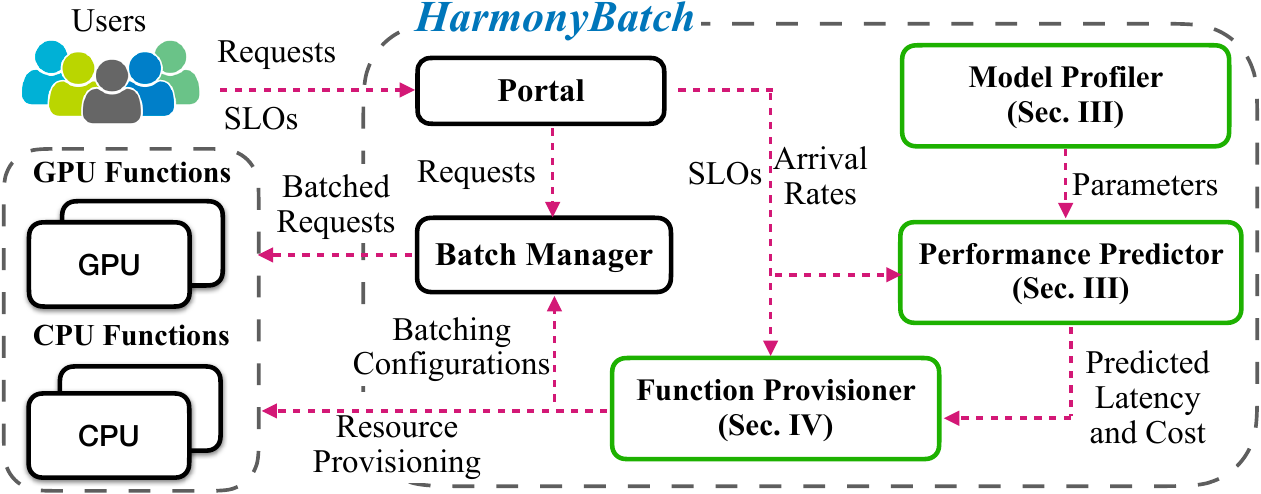}}
	\caption{Overview of \Name{}.}
	\label{introdution-prototype}\vspace{-10pt}
\end{figure}

To address these challenges above, we introduce \Name{} as illustrated in Fig.~\ref{introdution-prototype}, a cost-efficient function resource provisioning framework for achieving predictable DNN inference in serverless platforms.
To the best of our knowledge, \Name{} is the first work to provision heterogeneous (CPU and GPU) functions for judiciously batching multi-SLO DNN inference in public clouds. The main contributions of our paper are as follows:

\noindent $\rhd$ By obtaining model parameters through a lightweight workload profiling in the \emph{model profiler} (Sec.~\ref{sec:model-latency}), we develop an analytical \emph{performance predictor} (Sec.~\ref{sec:model}) for DNN inference workloads on both CPU and GPU serverless functions. We explicitly consider the GPU time-slicing schedule mechanism and request arrival rate distribution. 

\noindent $\rhd$ We design a cost-efficient inference resource provisioning strategy in the \emph{function provisioner} (Sec.~\ref{sec:algorithm}) for multi-SLO DNN inference. Our strategy uses a two-stage merging strategy to judiciously batch the multi-SLO DNN inference requests into application groups, with the aim of minimizing the budget of function provisioning for each application group while guaranteeing DNN inference performance SLOs.

\noindent $\rhd$ We implement a prototype of \Name{} (\url{https://github.com/icloud-ecnu/HarmonyBatch}) with both CPU and GPU functions on Alibaba Cloud Function Compute~\cite{Ali_function}. Extensive prototype experiments with four representative DNN models (including LLMs) demonstrate that \Name{} can achieve predictable performance for multi-SLO DNN inference, while saving the inference budget by up to \CostReduce{}, compared to state-of-the-art methods (\emph{e.g.,} BATCH~\cite{ali2020batch}, MBS~\cite{ali2022optimizing}).


\begin{figure*}
        \begin{minipage}[t]{0.32\linewidth}
		\centering
		\includegraphics[width=2.3in]{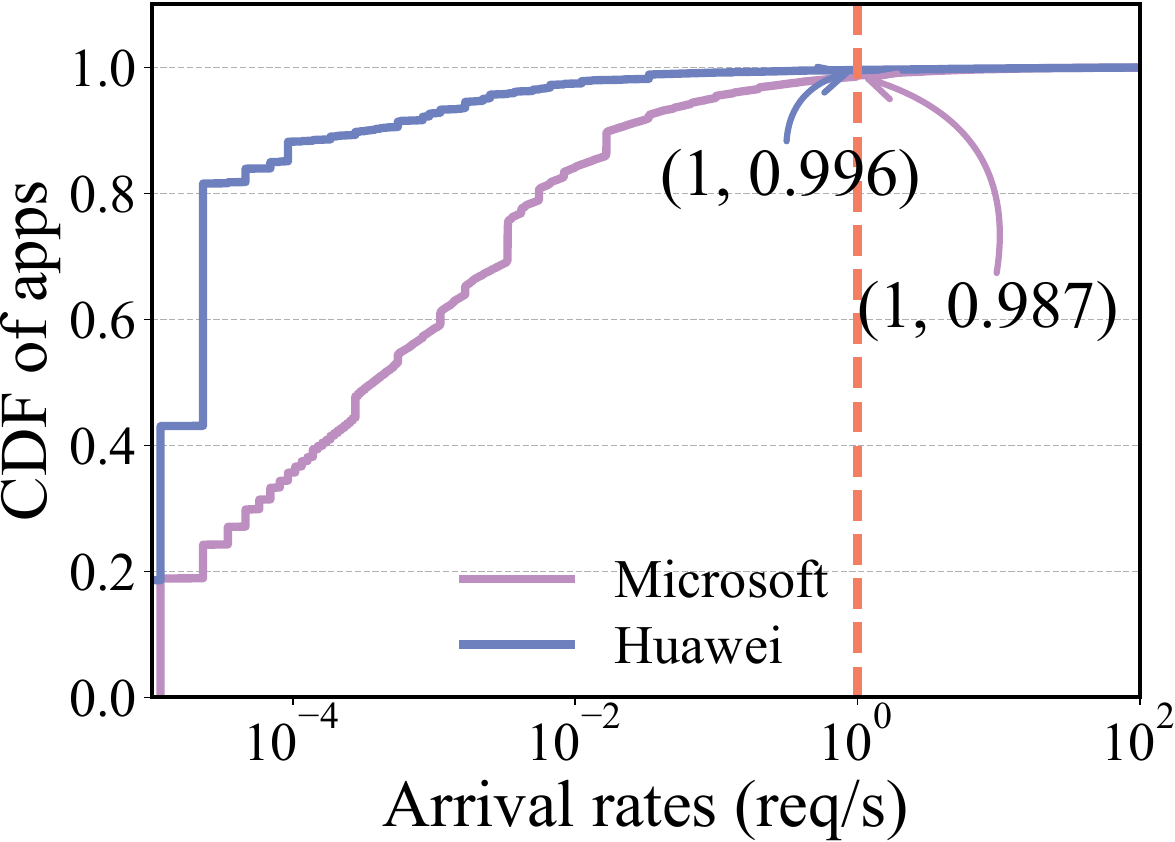}
		\caption{CDF of request arrival rates per function in Microsoft Azure Functions trace~\cite{shahrad2020serverless} and Huawei Public Cloud trace~\cite{joosen2023does}.}
		\label{workloads-cdf}
	\end{minipage}\hspace{+4pt}
	\begin{minipage}[t]{0.32\linewidth}
		\centering
		\includegraphics[width=2.3in]{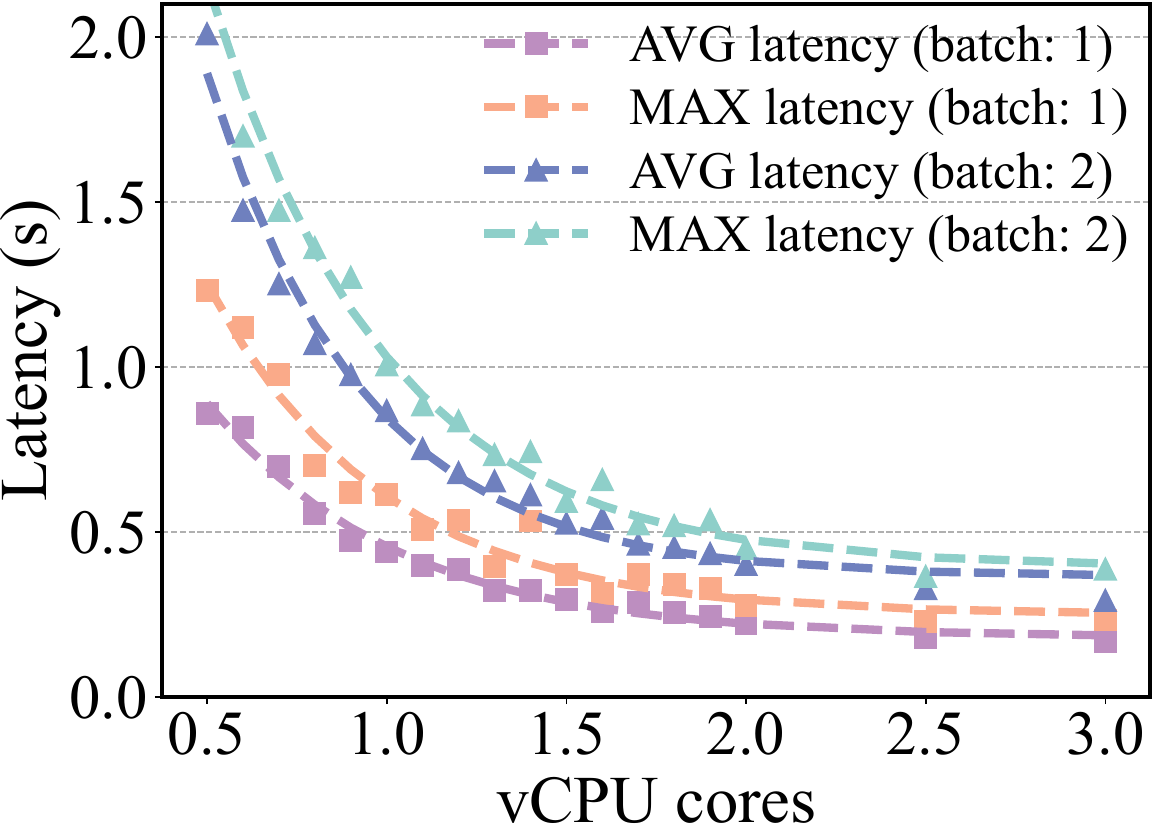}
		\caption{Inference latency of VGG-19 executed on a CPU function by varying the allocated vCPU cores from $0.5$ to $3.0$.}
		\label{motivation-cpu-vgg19}
	\end{minipage}\hspace{+4pt}
	\begin{minipage}[t]{0.32\linewidth}
		\centering
		\includegraphics[width=2.3in]{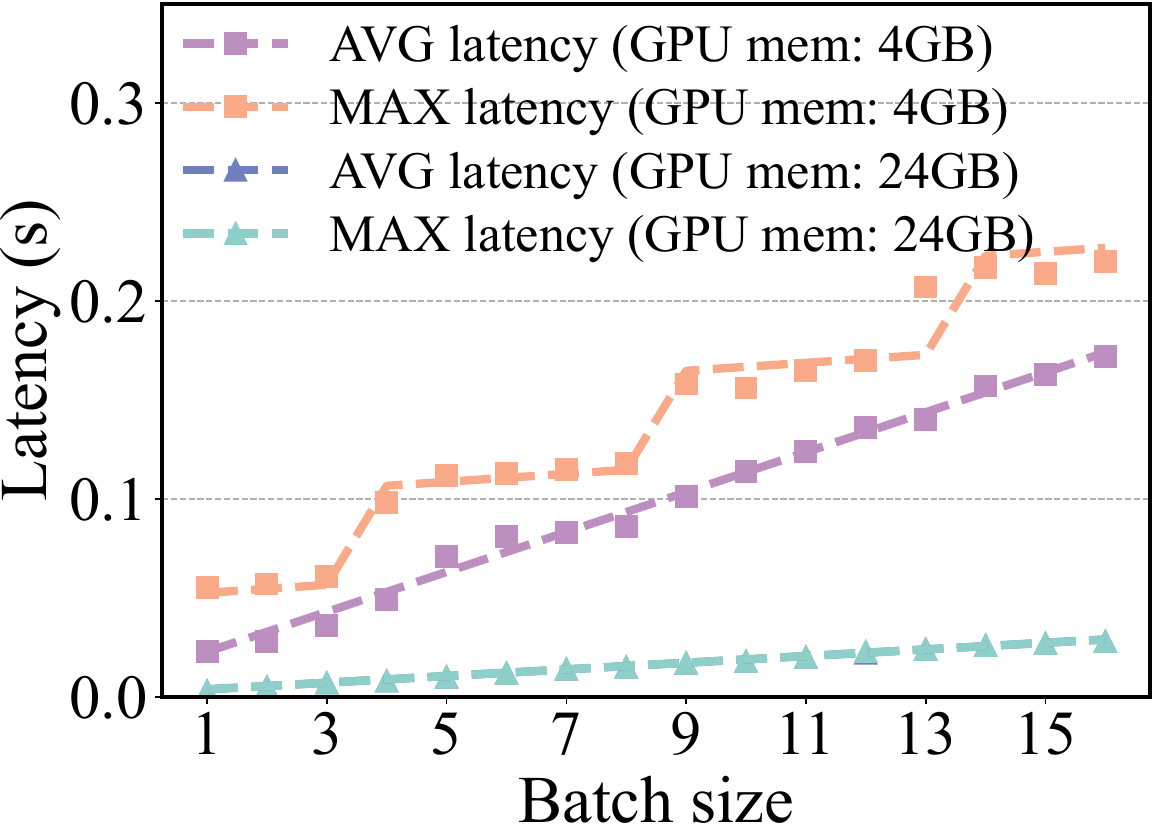}
		\caption{Inference latency of VGG-19 executed on a GPU function by configuring the batch size from $1$ to $16$.}
		\label{motivation-gpu-vgg19}
        \end{minipage}
        \vspace{-12pt}
\end{figure*}

\section{Background and Motivation}
\label{sec:motivation}

In this section, we first investigate the key factors that impact the performance and cost of serverless DNN inference. Then, we show that adequately batching requests onto \emph{heterogeneous} functions can significantly save the user budget.

\subsection{DNN Inference in Serverless Platforms}
\label{sec:motivation-serverless}

Deploying and batching DNN inference requests on serverless platforms can reduce the user budget~\cite{ali2020batch}. However, the request arrival rate of applications is typically low (\emph{i.e.,} less than $1$ request per second), as analyzed by the average request rates of applications in the Microsoft Azure Functions trace~\cite{shahrad2020serverless} and Huawei Public Cloud trace~\cite{joosen2023does}. Specifically, approximately $98.7\%$ of applications in Microsoft Azure and $99.6\%$ of applications in Huawei Cloud demonstrate a request rate of fewer than $1$ request per second, as depicted in Fig.~\ref{workloads-cdf}. As DNN inference workloads are typically latency-critical (in milliseconds), such low request arrival rates call for batching multi-SLO DNN inference requests on serverless platforms. 
In general, there exist two types of serverless functions in mainstream public clouds. Specifically, CPU functions have finer resource allocation granularity (\emph{i.e.,} $0.05$ vCPU cores on Alibaba Cloud Function Compute~\cite{Ali_function}) and lower prices, while GPU functions are equipped with higher computing power (\emph{i.e.,} $1$ total GPU for a GPU function on Alibaba Cloud Function Compute~\cite{Ali_function}) and a larger amount of GPU memory. In such a case, it is \emph{complex} for users to determine the appropriate CPU or GPU functions to be deployed together with the allocated function resources (\emph{i.e.,} vCPU cores and GPU memory). 



\subsection{Characterizing Serverless Inference Performance and Cost}
\label{sec:motivation-performance}

To explore the characteristics of serverless DNN inference performance (\emph{i.e.,} latency) and monetary cost, we conduct motivation experiments of representative DNN models (\emph{i.e.,} VGG-19~\cite{simonyan2014very}, BERT~\cite{devlin2019BERT}) on CPU and GPU functions in Alibaba Cloud Function Compute~\cite{Ali_function}. 
We execute DNN inference workloads for $100$ times with each function configuration in latency motivation experiments,
while in monetary cost motivation experiments, we execute DNN inference for $10$ minutes with each function configuration. 




\textbf{Latency.} As shown in Fig.~\ref{motivation-cpu-vgg19}, the average and maximum inference latency of VGG-19 both show a \emph{roughly exponential} decrease as more vCPU cores are allocated to CPU functions. This is because DNN inference workloads are commonly multi-core friendly, and the inference latency drops rapidly as we increase function resources at first. Our results indicate that provisioning more vCPU cores for CPU functions and increasing inference batch sizes can bring \emph{marginal} performance benefits.
As depicted in Fig.~\ref{motivation-gpu-vgg19}, the average and maximum inference latency on GPU functions \emph{overlap} each other, both exhibiting a linear relationship with the batch size with the $24$-GB configuration. Interestingly, when switching to the $4$-GB configuration, the average inference latency is \emph{roughly} linear to the batch size, while the maximum inference latency shows a \emph{stepwise} increase with the batch size. We attribute such a result above to the time-slicing scheduling mechanism of GPU functions~\cite{cGPU}, which will be elaborated in Sec.~\ref{sec:model}. 
In addition, there exists a noticeable difference between the average and maximum inference latency on both CPU and GPU functions, which indicates that DNN inference latency can be \emph{unstable} due to the performance interference among functions~\cite{xu2022igniter,li2023golgi}.

\textbf{Cost.} As illustrated in Fig.~\ref{motivation-slo}, the optimal function provisioning plans\footnote{The optimal resource provisioning plan is obtained by iterating through all possible configurations via the grid search method.} under diverse SLOs ranging from stringent to relaxed turns out to GPU, CPU, and GPU functions. Under strict SLOs, CPU functions cannot meet the SLO requirements, while GPU functions with low batch sizes can meet such requirements. As SLOs become larger (more relaxed), CPU functions gain the advantage with their lower unit price at first, and then GPU functions become cost-efficient again, as longer batching time and requests with larger batch sizes are allowed. 
As shown in Fig.~\ref{motivation-rps}, the normalized cost decreases as the request arrival rate increases especially for GPU functions. This is because the batch size for inference gets large on GPU functions as the request arrival rate increases. The results indicate that batching requests from multiple applications for larger request arrival rates can bring significant cost benefits.
In addition, the cost of DNN inference for CPU functions is insensitive to the changes in SLOs or request arrival rates.

\begin{figure*}
	\begin{minipage}[t]{0.29\linewidth}
		\centering
		\includegraphics[width=2.1in]{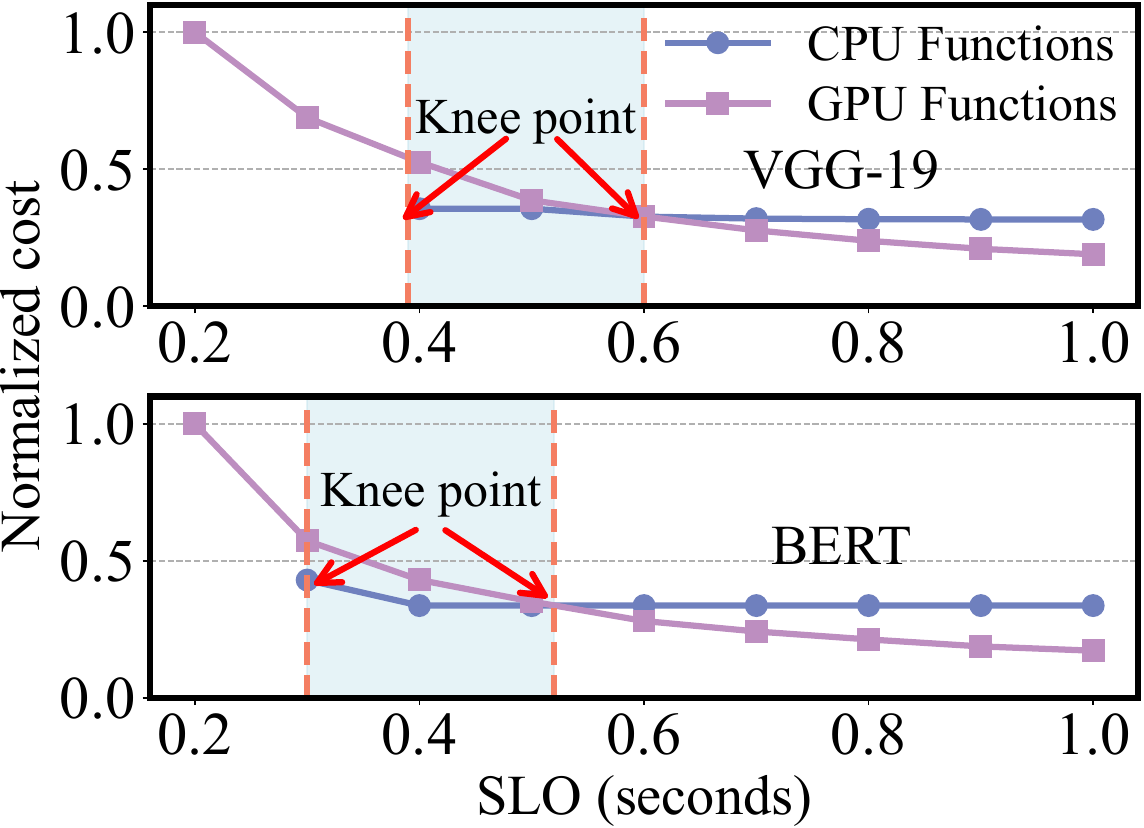}
		\caption{Normalized cost of optimal function provisioning plan under different SLOs with the request arrival rate set as $20$ req/s. The blue area denotes the optimal plan is CPU functions and two \emph{knee} points exist.}
		\label{motivation-slo}\vspace{-12pt}
	\end{minipage}\hspace{+4pt}
	\begin{minipage}[t]{0.29\linewidth}
		\centering
		\includegraphics[width=2.1in]{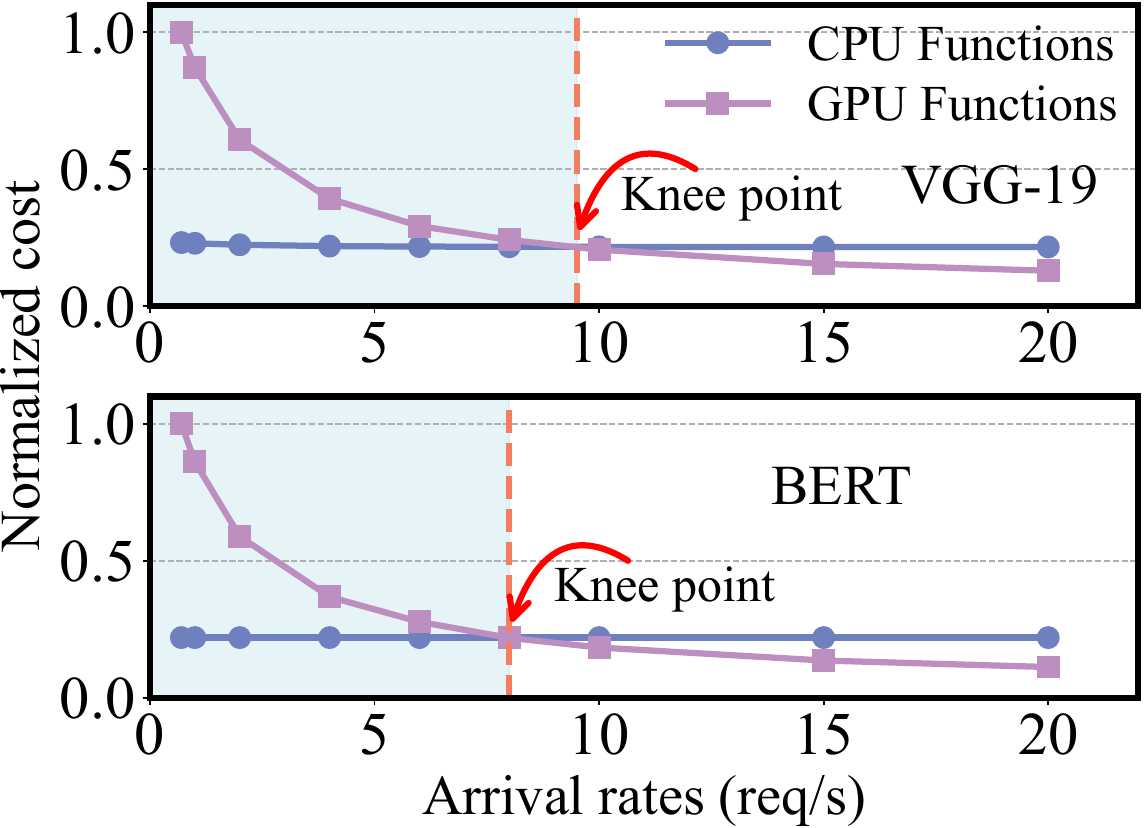}
		\caption{Normalized cost of optimal function provisioning plan under different request arrival rates with the SLO set as $1$ second. The blue area denotes the optimal plan is CPU functions and one \emph{knee} point exists.}
		\label{motivation-rps}\vspace{-12pt}
	\end{minipage}\hspace{+4pt}
	\begin{minipage}[t]{0.41\linewidth}
		\vspace{-106pt}
		\centering
		\renewcommand{\arraystretch}{1.3}
		\resizebox{2.85in}{!}{
		\begin{tabular}{c|c|c}
		\toprule[1pt]
		Strategies & Provisioning plans & Norm. cost \\
		\midrule[1pt]
		\multirow{2}{*}{BATCH}	    & $\mathtt{(1.55, 1, \left[0\right])_c, (1.5, 2, }$                  & \multirow{2}{*}{$\mathtt{1.00}$}  \\
                            		& $\mathtt{\left[0.48\right])_c, (1.5, 2, \left[0.48\right])_c}$          & \\ \hline
		\multirow{2}{*}{MBS$^+$}  	& $\mathtt{(1.6, 1, \left[0, 0, 0\right])_c}$,      	                        & \multirow{2}{*}{$\mathtt{0.88}$}  \\ 
                            		& $\mathtt{(2, 11, \left[0.65\right])_g}$                               &  \\ \hline
		\multirow{2}{*}{\Name{}}    & $\mathtt{(1.6, 1, \left[0\right])_c,}$                        	    & \multirow{2}{*}{\textcolor{red}{$\mathbf{0.63}$}} \\
                            		& $\mathtt{(2, 13, \left[0.45, 0.65\right])_g}$                    		&  \\
		\bottomrule[1pt]
		\end{tabular}
		}
		\vspace{+6pt}
		\captionof{table}{Comparison of the normalized cost of VGG-19 with BATCH~\cite{ali2020batch}, MBS$^+$~\cite{ali2022optimizing}, and \Name{} strategies. The function provisioning plan is represented as a $3$-tuple value: $\mathtt{(vCPU \  cores, batch, \left[timeouts\right])_c}$ for CPU functions and $\mathtt{(GPU \ memory, batch, \left[timeouts\right])_g}$ for GPU functions.}
		\label{tab:motivation}\vspace{-12pt}
	\end{minipage}
\end{figure*}

\subsection{An Illustrative Example}
\label{sec:motivation-example}

Batching multiple DNN inference requests from applications with diverse SLOs is challenging. In response, we design \Name{} to identify cost-effective function resource provisioning for the predictable performance of DNN inference workloads in serverless platforms. As an example, we conduct motivation experiments using three applications (\emph{i.e.,} App$_1$, App$_2$, and App$_3$) sharing the same model VGG-19~\cite{simonyan2014very}, but with different SLOs and arrival rates. Specifically, the performance SLOs of App$_1$, App$_2$, and App$_3$ are $0.5$, $0.8$, and $1$ seconds, respectively. The request arrival rates of App$_1$, App$_2$, and App$_3$ are $5$, $10$, and $20$ req/s, respectively. 

As listed in Table~\ref{tab:motivation}, \Name{} can significantly reduce the function resource provisioning cost by up to $37.0\%$ compared to the baselines. Specifically, the BATCH strategy~\cite{ali2020batch} achieves the highest cost with several SLO violations. This is because it only batches requests from one application and simply assumes the inference latency following a deterministic distribution (\emph{i.e.,} stable values). 
Though we modify the MBS$^+$ strategy to support heterogeneous function provisioning, it still brings a higher inference monetary cost compared to \Name{}. 
This is because the MBS$^+$ strategy divides the requests from the three applications evenly, which aggregates App$_1$, App$_2$ and part of App$_3$ on a CPU function with the batch size set as $1$. The remaining requests of App$_3$ are routed to a GPU function with a moderate batch size set as $11$.
In contrast, \Name{} provisions three applications with heterogeneous functions, by adequately aggregating App$_2$ and App$_3$ as a large inference batch set as $13$ on a GPU function.


\textbf{Summary.} \emph{First}, judiciously batching \emph{multi-SLO} inference requests can effectively save monetary cost while guaranteeing performance SLOs for DNN inference workloads. \emph{Second}, provisioning \emph{heterogeneous} functions for DNN inference applications can yield significant (up to \GPUReduce{}) cost benefits compared to homogeneous provisioning solutions. In particular, CPU functions are cost-effective for inference applications with \emph{moderate} SLOs and \emph{low} request arrival rates, while GPU functions are cost-effective for inference applications with \emph{tight or loose} SLOs and \emph{high} request arrival rates.

\section{System Model}
\label{sec:model}

In this section, we model the inference latency with the CPU and GPU functions and leverage the request arrival rate to model the inference cost. The key notations in our serverless DNN inference model are summarized in Table~\ref{tab:notations}.

\begin{table}[!t]\vspace{+0pt}
\centering
\caption{Key notations in our DNN inference latency and cost model in serverless platforms.}
\label{tab:notations}\vspace{-0pt}
\renewcommand{\arraystretch}{1.3}\vspace{+5pt}
\resizebox{\linewidth}{!}{ 
\begin{tabular}{c|p{5.6cm}}
\toprule[1pt]
\textbf{Notation}   & \textbf{Definition}                                   \\
\toprule[1pt]
\multirow{2}{*}{$L_{avg}^{t}, L_{max}^{t}$}             & Average, maximum inference latency of an inference workload on functions with a type $t=c,g$ \\ \hline
\multirow{2}{*}{$L_{0}^{g}$}                            & Inference latency on a GPU function configured with the maximum GPU memory size $M_{max}$ \\ \hline
\multirow{2}{*}{$\alpha_{b}^{avg}, \beta_{b}^{avg}, \gamma_{b}^{avg}$}    & Model coefficients of inference average latency on CPU functions with the batch size set as $b$\\ \hline
\multirow{2}{*}{$\alpha_{b}^{max}, \beta_{b}^{max}, \gamma_{b}^{max}$}    & Model coefficients of inference maximum latency on CPU functions with the batch size set as $b$\\ \hline
\multirow{2}{*}{$\xi_1, \xi_2$}                         & Model coefficients of inference latency on GPU functions \\ \hline
\multirow{2}{*}{$c, m$}                                 & vCPU cores of provisioned CPU functions, GPU memory of provisioned GPU functions\\ \hline
\multirow{2}{*}{$C^{\mathcal{X}}$}                           & Average monetary cost of an inference request of the application group $\mathcal{X}$\\ \hline
\multirow{2}{*}{$K_1, K_2, K_3$}                        & Unit cost of a vCPU core, GPU memory, and a function invocation\\ \hline
\multirow{2}{*}{$r^{\mathcal{X}}$, $b^{\mathcal{X}}$, $T^{\mathcal{X}}$}                            & Total request arrival rate, batch size, equivalent batching timeout of the application group $\mathcal{X}$\\
\bottomrule[1pt]
\end{tabular}
}
\vspace{-10pt}
\end{table}

\subsection{Modeling Latency of Serverless Inference}
\label{sec:model-latency}

We focus on two key metrics including the average inference latency and maximum inference latency. The former is used to calculate the monetary cost of inference, while the latter is used to evaluate SLO violations.

\textbf{CPU functions.} As elaborated in Sec.~\ref{sec:motivation-performance}, the average inference latency on CPU functions $L_{avg}^{c}$ decreases exponentially as more vCPU cores are provisioned. The average latency for batch size $b$ on CPU functions can be given by
\begin{equation}
    L_{avg}^{c} = \alpha_{b}^{avg}\cdot \exp\big(-\frac{c}{\beta_{b}^{avg}}\big) + \gamma_{b}^{avg},
\label{eq-cpu-lat}
\end{equation}
where $\alpha_{b}^{avg}, \beta_{b}^{avg}, \gamma_{b}^{avg}$ are the model coefficients with the batch size set as $b$, and the variable $c$ denotes the allocated vCPU cores of functions. We apply a similar method to model the maximum inference latency on CPU functions $L_{max}^{c}$ by leveraging the model coefficients $\alpha_{b}^{max}, \beta_{b}^{max}$ and $\gamma_{b}^{max}$.

\textbf{GPU functions.} When a GPU function is provisioned with the maximum GPU memory of $M_{max}$ (\emph{e.g.,} $24$ GB for an NVIDIA A10 GPU), the GPU function exclusively occupies a whole GPU device, leading to a \emph{stable} inference latency. As evidenced in Sec.~\ref{sec:motivation-performance}, the average and maximum latency overlap each other with the function GPU memory set as $M_{max}$. In addition, the inference latency on GPU functions with $M_{max}$ is roughly linear to the batch size $b$. Accordingly, we formulate the inference latency $L_{0}^{g}$ on GPU functions with $M_{max}$ as
\begin{equation}
    L_{0}^{g} = \xi_1 \cdot b + \xi_2,
\label{eq-gpu-lat}
\end{equation}
where $\xi_1$ and $\xi_2$ are model coefficients for GPU functions. 

We further model the average inference latency $L_{avg}^{g}$ and maximum inference latency $L_{max}^{g}$ on GPU functions. To facilitate serverless GPU functions, Alibaba Cloud Function Compute~\cite{Ali_function} deploys GPU temporal sharing mechanism (\emph{i.e.,} cGPU~\cite{cGPU}). Specifically, cGPU partitions the GPU's computing power into $M_{max}$ units with each lasting for a duration of $\tau$. It combines multiple time slices into a larger time slice $m \cdot \tau$ which is assigned to a GPU function with the GPU memory set as $m$. Though the latency of an inference request on GPU functions can be influenced by its arrival time, the average inference latency is still roughly linear to $L_{0}^{g}$ (in terms of the batch size), which is estimated as
\begin{equation}
    L_{avg}^{g} = \frac{M_{max}}{m} \cdot L_{0}^{g}.
\label{eq-gpu-avg-lat}
\end{equation}
Furthermore, by assuming an inference request demands $2m\cdot\tau$ to complete the execution, 
the maximum and minimum inference latency can be obtained as $2M_{max} \cdot \tau$ and $(M_{max} + m) \cdot \tau$, respectively, 
as shown in Fig.~\ref{model-timeslice}.
Accordingly, as for a request that demands $L_{0}^{g}$ to complete the execution and arrives at the start of the preempted time slice, it requires undergoing an additional number (\emph{i.e.,} $\big\lceil \frac{L_{0}^{g}}{m \cdot \tau} \big\rceil$) of preempted time slices. As a result, we formulate the maximum inference latency on GPU functions as
\begin{equation}
    L_{max}^{g} = \Big\lceil \frac{L_{0}^{g}}{m \cdot \tau} \Big\rceil \cdot (M_{max}-m) \cdot \tau + L_{0}^{g}.
\label{eq-gpu-max-lat}
\end{equation}

\begin{figure}[!t]
  	\centerline{
  		\includegraphics[width=0.92\linewidth]{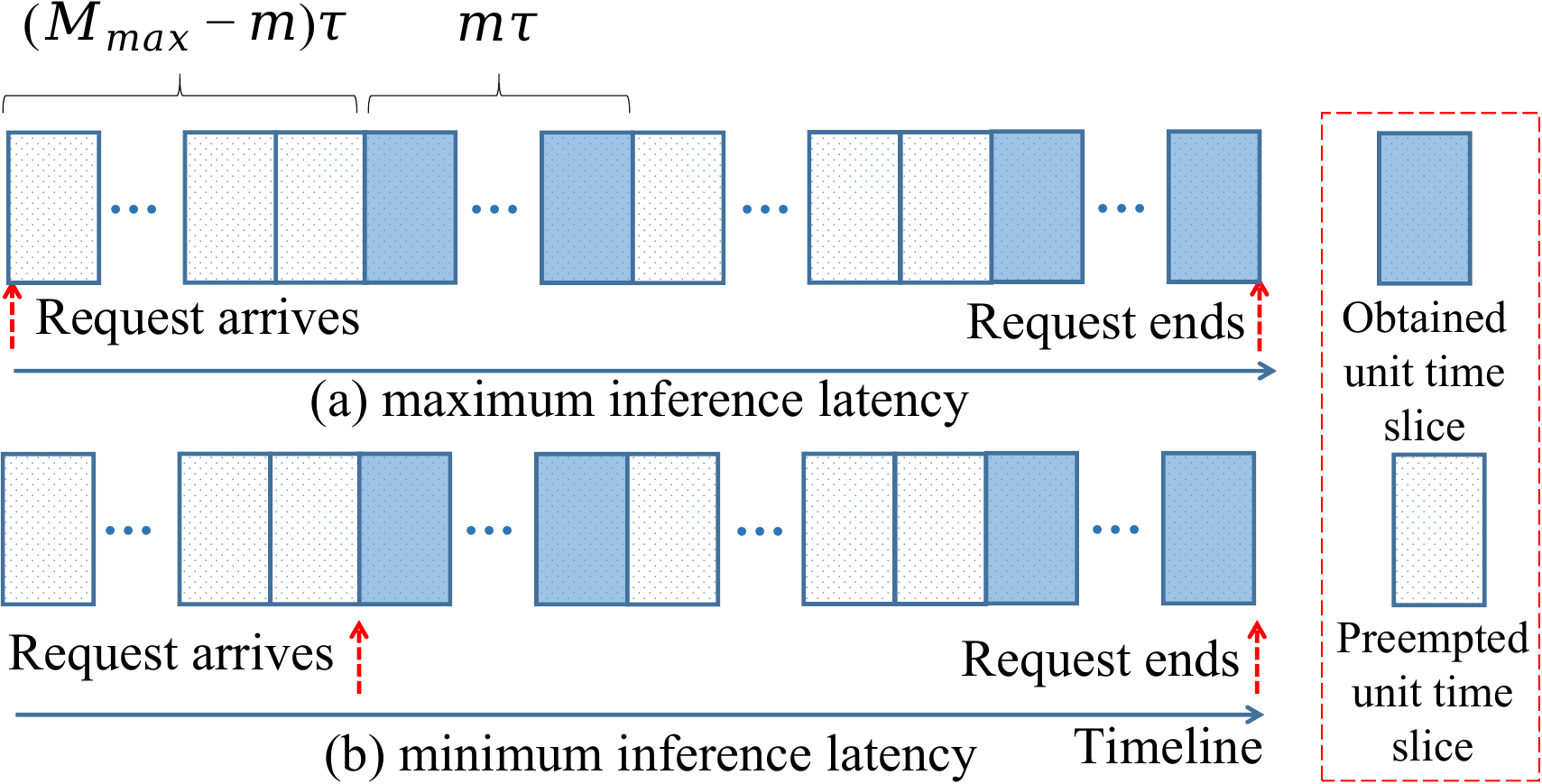}}
	\caption{Maximum and minimum inference latency scenarios caused by GPU time-slicing scheduling mechanism~\cite{cGPU} on a GPU function with the GPU memory set as $m$. It hosts an inference request that demands $2m \cdot \tau$ time slices, where $\tau$ denotes a unit GPU time slice. (a) The request arrives at the beginning of a \emph{preempted} time slice, resulting in the maximum inference latency $2M_{max} \cdot \tau$; and (b) the request arrives at the start of the \emph{obtained} time slice, leading to the minimum inference latency $(M_{max} + m) \cdot \tau$.}
	\label{model-timeslice}	\vspace{-10pt}
\end{figure}

\textbf{Model coefficients acquisition.} Based on our model above, we have six workload-specific coefficients (\emph{i.e.,} $\alpha_{b}^{avg}, \beta_{b}^{avg}, \gamma_{b}^{avg}, \alpha_{b}^{max}, \beta_{b}^{max}, \gamma_{b}^{max})$ for CPU functions, two workload-specific coefficients (\emph{i.e.,} $\xi_1, \xi_2$) and one hardware-specific parameter (\emph{i.e.,} $\tau$) for GPU functions. 
To determine the six coefficients for CPU functions, we execute DNN inference workloads $100$ times on each configuration by varying function vCPU cores and batch sizes. 
As evidenced by Sec.~\ref{sec:motivation-performance}, CPU functions typically outperform GPU functions with small batch sizes. To obtain model coefficients, we only profile DNN inference workloads with small batch sizes (\emph{i.e.,} ranging from $1$ to $4$), which significantly reduces the profiling overhead.

To obtain the workload-specific coefficients for GPU functions, we configure the GPU function with the GPU memory set as $M_{max}$ and execute DNN inference workloads only three times with two different batch sizes, as the inference latency on GPUs is stable.
To identify $\tau$, we get the $L_{max}^g$ and $L_0^g$ by running VGG-19 inference on a GPU function with a suitable GPU memory $m$ and $M_{max}$ for $100$ times, respectively.

\subsection{Modeling Monetary Cost of Multi-SLO DNN Inference}
\label{sec:model-cost}

In a scenario with multiple applications (\emph{i.e.,} a group $\mathcal{X}$), we assume that an inference application App$_i$ in the group $\mathcal{X}$ follows the Poisson distribution with the request arrival rate $r_i$. The application requests are first cached in a buffer with the capacity of $b^{\mathcal{X}}$ for batching. We set a batching timeout $T_i$ for each application to avoid long request waiting in the buffer. Once any $T_i$ expires, the cached requests (in a batch) are then sent to the functions for inference. 

To execute the inference with the maximum batch size (\emph{i.e.,} $b^{\mathcal{X}}$), the prerequisite $\lfloor r^{\mathcal{X}} \cdot T^{\mathcal{X}} \rfloor + 1 \geq b^{\mathcal{X}}$ needs to be held, where $\lfloor r^{\mathcal{X}} \cdot T^{\mathcal{X}} \rfloor + 1$ denotes the total number of requests received over a period $T^{\mathcal{X}}$, including the first request.
In more detail, $r^{\mathcal{X}}$ is the total request arrival rate, and $T^{\mathcal{X}}$ is the \emph{equivalent batching timeout} of group $\mathcal{X}$, which is considered as the \emph{expectation value} of request waiting time in the buffer, as each application has its own batching timeout. To illustrate that, we start from two applications with their request arrival rates and batching timeouts (\emph{i.e.,} App$_1$ with $r_1, T_1$ and App$_2$ with $r_2, T_2$).
By assuming that $T_1$ is smaller than $T_2$, the equivalent batching timeout of $\mathcal{X}$ can be calculated by
\begin{equation}
\begin{split}
    T^{\mathcal{X}} = T_1 + \eta_2 \cdot \frac{1-\exp\big(-r_1 \cdot (T_2-T_1)\big)}{r_1},
    \label{eq-t-exp}
\end{split}
\end{equation}
where $\eta_2 = \frac{r_2}{r_1+r_2}$ denote the probability of the first request from App$_2$. The derivation can be found in Appendix~\ref{sec:appendix-eq}~\cite{harmonybatch}.
To obtain the equivalent batching timeout of a large group $\mathcal{X}$ with two more applications, we can \emph{iteratively} apply Eq.~\eqref{eq-t-exp} to a sequence of two applications in the group $\mathcal{X}$.

According to the pricing of function resources in Alibaba Cloud Function Compute~\cite{Ali_function}, we further model the average monetary cost $C^{\mathcal{X}}$ of an inference request in terms of vCPU cores, GPU memory, and function invocations, which is given by
\begin{equation}
    C^{\mathcal{X}} = \frac{1}{b^{\mathcal{X}}} \big[ L_{avg}^{t} \cdot (c \cdot K_1 + m \cdot K_2) + K_3\big],
\label{eq-cost}
\end{equation}
where $L_{avg}^{t}$ is the average inference latency with the batch size $b^{\mathcal{X}}$ and the function type $t$. $K_1, K_2$ is the unit cost of vCPU cores $c$ and GPU memory $m$. $K_3$ is the constant cost of each function invocation. 
In particular, $m = 0$ represents a CPU function, and $c = 0$ represents a GPU function.

\section{Algorithm Design}
\label{sec:algorithm}

In this section, we first formulate the optimization problem of function resource provisioning for multi-SLO DNN inference. We then design and implement \Name{} to provide predictable performance for multi-SLO DNN inference with heterogeneous serverless functions.

\subsection{Optimizing Serverless Inference Resource Provisioning}
\label{sec:algorithm-problem}

We assume a set of DNN inference applications $\mathcal{W} = \{w_1, w_2, ...., w_n\}$ sharing the same DNN model with the inference latency SLOs $\mathcal{S} = \{s^{w_1}, s^{w_2}, ..., s^{w_n}\}$ and request arrival rates $\mathcal{R} = \{r^{w_1}, r^{w_2}, ..., r^{w_n}\}$. 
We categorize the application set $\mathcal{W}$ into several \emph{groups} $\mathcal{G} = \{\mathcal{X}_1, \mathcal{X}_2, ..., \mathcal{X}_m\}$. Each group $\mathcal{X} = \{w_j, w_{j+1}, ...\}$ is provisioned with an appropriate CPU function or GPU function, with the aim of meeting application SLO requirements while minimizing the average monetary cost of each inference request. 
Based on the DNN inference performance and cost models in Sec.~\ref{sec:model}, we can formulate the optimization problem as
\begin{eqnarray}
    \min_{\mathcal{G}, \mathcal{F}, \mathcal{B}} & & Cost = \sum_{\mathcal{X} \in \mathcal{G}} \eta^{\mathcal{X}} \cdot C^{\mathcal{X}}
    \label{eq-optimization}\\
    \text{s.t.}
    & & m^{\mathcal{X}} \geq M^{\mathcal{X}}, \  \forall \ \mathcal{X} \in \mathcal{G} \label{eq-cons-gmem}\\
    & & b^{\mathcal{X}} \leq \lfloor r^{\mathcal{X}} \cdot T^{\mathcal{X}} \rfloor + 1, \forall \ \mathcal{X} \in \mathcal{G} \label{eq-cons-rtb}\\
    & & t^w + L_{max}^{t} \leq s^w, \  \forall \ w \in \mathcal{X}, \ \mathcal{X} \in \mathcal{G} \label{eq-cons-SLO}
\end{eqnarray}
where $\eta^{\mathcal{X}}$ is the ratio of the request arrival rate of a group $\mathcal{X}$ to the total request arrival rate. $C^{\mathcal{X}}$ is the average monetary cost of a group $\mathcal{X}$.
Each group $\mathcal{X}$ is configured with a function of resource $f^{\mathcal{X}} \in \mathcal{F}$ (\emph{i.e.,} a tuple of vCPU cores $c^{\mathcal{X}}$ and GPU memory $m^{\mathcal{X}}$, $f^{\mathcal{X}} = [c^{\mathcal{X}}, m^{\mathcal{X}}]$).
Constraint~\eqref{eq-cons-gmem} guarantees the GPU memory demands $M^{\mathcal{X}}$ of inference, which are proportional to the batch size.
In addition, $b^{\mathcal{X}} \in \mathcal{B}$ denotes the batch size configured to group $\mathcal{X}$ and $t^w$ is the timeout configured with the application $w$.
Constraint~\eqref{eq-cons-rtb} guarantees that DNN inference is executed with the batch size $b^{\mathcal{X}}$.
Constraint~\eqref{eq-cons-SLO} guarantees the latency SLO $s^{w}$ for an application $w$, where the $L_{max}^{t}$ is the maximum inference latency with the batch size $b^{\mathcal{X}}$. $T^{\mathcal{X}}$ is calculated by $t^{w}$ in the group $\mathcal{X}$ by Eq.~\eqref{eq-t-exp}. To greedily enlarge the batching timeout $T^{\mathcal{X}}$, we set the $t^{w}$ as the maximum value which meets the Constraint~\eqref{eq-cons-SLO} (\emph{i.e.,} $t^{w} = s^w - L_{max}^{t}$).

\textbf{Problem analysis.} 
Our group solution $\mathcal{G}$ has a large searching space of $B_{|\mathcal{W}|}$, which is the Bell number~\cite{bell1938iterated}. Given a group $\mathcal{G}$, the function resource $f \in \mathcal{F}$ can only take limited discrete values. Meanwhile, the batch size $b \in \mathcal{B}$ is constrained to integer values. Accordingly, the resource provisioning problem can be reduced to an integer programming problem.
Obviously, the total average monetary cost and Constraint~\eqref{eq-cons-SLO} is non-linear with the configuration parameters, and thus our optimization problem can further be reduced to a \emph{non-linear integer programming} problem, which is an NP-hard problem~\cite{boyd2004convex}. We turn to designing a heuristic algorithm in Sec.~\ref{sec:algorithm-design} to solve such an optimization problem.

\SetAlFnt{\small}
\SetAlgoVlined \vspace{-0pt}
\begin{algorithm}[!t]
\caption{\Name{}: Two-stage merging strategy for application groups.}
\label{config-alg-group}
\SetKwInOut{KwIn}{Input}
\SetKwInOut{KwOut}{Output}
\KwIn{A set of applications $\mathcal{W}$ with their SLOs $\mathcal{S}$ and arrival rates $\mathcal{R}$.}
\KwOut{A set of group $\mathcal{G}$, sets of function provisioning plans $\mathcal{F}$ and batch size $\mathcal{B}$.}
\BlankLine
\textbf{Initialize:} $\mathcal{G} \gets \{\{w_1\}, \{w_2\}, ..., \{w_n\}\}$\;
Provision function resources for each $\mathcal{X} \in \mathcal{G}$, $C^{\mathcal{X}}, f^{\mathcal{X}}, b^{\mathcal{X}} \gets \mathtt{funcProvision}$($\mathcal{X}, s^{\mathcal{X}}, r^{\mathcal{X}}$)\;
Sort the applications in $\mathcal{G}$ with SLOs and initialize the group list $\mathbf{L} \gets sortWithSLO(\mathcal{G})$\;
\tcp{Stage1: merging CPU functions}
Set index $i \gets 0, j \gets 0$; Set the request arrival rate $r \gets 0$\; 
\While {$i < |\mathbf{L}|$} {
    \If{$c^{\mathbf{L}[i]} > 0$} { 
        $r \gets r + r^{\mathbf{L}[i]}$\;
        \tcp{$r^*$ is the arrival rate knee point}
        \If{$r > r^*$} {
            $\mathbf{L}, \_ \gets$ \texttt{Merge}($\mathbf{L}, j, i+1$)\;
            Set $i \gets j$, $j \gets j+1$ and $r \gets 0$\;
        }
    }
    \Else {
        Set $j \gets i+1$ and $r \gets 0$\;
    }
    Set $i \gets i + 1$\;
}

\tcp{Stage2: merging GPU functions}
Set index $i \gets 0$\;
\While {$i < |\mathbf{L}| - 1$} {
    \If {$m^{\mathbf{L}[i]} > 0$ or $m^{\mathbf{L}[i+1]} > 0$} { 
        $\mathbf{L}, isMerged \gets$ \texttt{Merge}($\mathbf{L}, i, i+2$)\;
        \If{isMerged} {
            $i \gets i-1$\;
        }
    }
    Set $i \gets i + 1$\;
}
\Return {$\mathcal{G}, \mathcal{F}, \mathcal{B}$}\;

\BlankLine
\SetKwFunction{FMerge}{Merge}
\SetKwProg{Fn}{Function}{\string :}{}
\Fn{\FMerge{$\mathbf{L}$, low, high}} {
    $\mathcal{X} \gets \mathbf{L}[low] \cup \mathbf{L}[low+1] \cup ...\mathbf{L}[high-1]$\;
    Provision function resources for group $\mathcal{X}$, $C^{\mathcal{X}}, f^{\mathcal{X}}, b^{\mathcal{X}} \gets \mathtt{funcProvision}(\mathcal{X}, s^{\mathcal{X}}, r^{\mathcal{X}})$\;
    \If {$C^{\mathcal{X}}$ is lower than the cost before merging} {
        Update $\mathcal{G}$, $\mathcal{F}$ and $\mathcal{B}$ with the function provisioning plan\;
        $\mathbf{L} \gets \mathbf{L}[:low] + \mathcal{X} + \mathbf{L}[:high]$\;
        \Return {$\mathbf{L}$, True}\;
    }
    \Return {$\mathbf{L}$, False}\;
}
\end{algorithm}

\subsection{Design of \emph{\Name{}} Strategy}
\label{sec:algorithm-design}

\Name{} divides the problem into two parts: \emph{First} is to divide the applications into different groups, and \emph{second} is to provision function resources for each application group.

\textbf{Two-stage merging for application groups.}
We consider placing the adjacent applications sorted by their SLOs in \emph{ascending order} into a group. According to Constraint~\eqref{eq-cons-SLO}, the batching timeout gap between two applications can be substantial if their SLO difference is large.
If the two applications (with batching timeouts denoted as $T_1$ and $T_2$, where $T_1 < T_2$) are grouped together, the equivalent batching timeout $T^{\mathcal{X}}$ becomes much less than $T_2$ by analyzing Eq.~\eqref{eq-t-exp}. This causes a smaller \emph{aggregated} batch size of DNN inference for all groups $\mathcal{X} \in \mathcal{G}$, thereby leading to a lower cost-efficient function resource provisioning, as evidenced by Sec.~\ref{sec:motivation-performance}.

Based on our analysis above, we design a two-stage group merging strategy in Alg.~\ref{config-alg-group}. 
Initially, each application is considered as a group, and the function provisioning plan and monetary cost for each group are calculated by our $\mathtt{funcProvision}$ strategy (lines $1$-$2$).
In the \emph{first} stage, groups originally deployed on CPU functions are merged to be deployed on GPU functions as much as possible.
To support the group merging for applications with adjacent SLOs, we sort the applications in $\mathcal{G}$ into a list $\mathbf{L}$ based on their SLOs, and only consecutive groups (or applications) in the list are merged (line $3$).
We leverage the \emph{knee point} of the request arrival rate as illustrated in Fig.~\ref{motivation-rps} to be the \emph{threshold} $r^{*}$ for group merging. If the total request arrival rate exceeds $r^{*}$, merging provides an opportunity to configure a more efficient GPU function (lines $4$-$13$).
In the \emph{second} stage, groups deployed on GPU functions and adjacent groups based on SLOs are merged as much as possible to increase the request arrival rate of the merged groups.
By iterating each group on GPU functions, \Name{} examines whether they can reduce the monetary cost after group merging (lines $14$-$20$). After that, \Name{} outputs the application groups and their corresponding function provisioning plans.

\textbf{$\mathtt{funcProvision}$: Function provisioning for an application group.}
We analyze the optimization of CPU/GPU function resource provisioning.
The configuration space of vCPU cores $c^{\mathcal{X}} \in [0.05, 16]$ for CPU functions with the step of $0.05$ is larger than that of $b^{\mathcal{X}} \in [1,4]$. The configuration space of GPU memory $m^{\mathcal{X}} \in [1, 24]$ for GPU functions with the step of $1$ is smaller than that of $b^{\mathcal{X}} \in [1, 32]$. We find that the CPU function exhibits a smaller batch size configuration space ($4$ choices) but a larger resource configuration space ($320$ choices), whereas the GPU function shows the opposite characteristics. To speed up searching for the optimal solution, we derive two theorems by analyzing Eq.~\eqref{eq-cost}.

\newtheorem{theorem}{Theorem}
\begin{theorem}\label{thm-cpu}
    Given a batch size $b^{\mathcal{X}}$, the minimum cost of CPU function provisioning can be achieved with the allocated vCPU cores set as $c^*$ (i.e., the relative minimum point or the boundary points).
\end{theorem}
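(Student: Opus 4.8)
The plan is to fix the batch size $b^{\mathcal{X}}$, which freezes the workload-specific coefficients $\alpha_{b}^{avg}, \beta_{b}^{avg}, \gamma_{b}^{avg}$ into constants, and thereby reduce the cost in Eq.~\eqref{eq-cost} to a function of the single variable $c$. Setting $m = 0$ for a CPU function and substituting the latency model Eq.~\eqref{eq-cpu-lat} into Eq.~\eqref{eq-cost} yields
\[
    C^{\mathcal{X}}(c) = \frac{K_1}{b^{\mathcal{X}}}\Big[\alpha_{b}^{avg}\, c\, \exp\big(-\tfrac{c}{\beta_{b}^{avg}}\big) + \gamma_{b}^{avg}\, c\Big] + \frac{K_3}{b^{\mathcal{X}}}.
\]
Since the additive term $K_3/b^{\mathcal{X}}$ and the positive prefactor $K_1/b^{\mathcal{X}}$ leave the minimizer unchanged, it suffices to analyze $g(c) = \alpha_{b}^{avg}\, c\, e^{-c/\beta_{b}^{avg}} + \gamma_{b}^{avg}\, c$ on the closed feasible interval $c \in [0.05, 16]$.

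Next I would invoke the extreme-value theorem: $g$ is continuous on a closed, bounded interval, so its minimum is attained, and by the first-order stationarity condition the minimizer is either an interior point with $g'(c) = 0$ or one of the two boundary points. The substance of the argument is to show that the interior stationary candidates form a small, explicitly characterized set. Differentiating gives $g'(c) = \alpha_{b}^{avg}\, e^{-c/\beta_{b}^{avg}}\big(1 - \tfrac{c}{\beta_{b}^{avg}}\big) + \gamma_{b}^{avg}$. I would then observe that $g'(0^{+}) = \alpha_{b}^{avg} + \gamma_{b}^{avg} > 0$, that $g'(c) \to \gamma_{b}^{avg} > 0$ as $c \to \infty$, and that the only negative contribution $-\alpha_{b}^{avg}\, e^{-c/\beta_{b}^{avg}}(c/\beta_{b}^{avg} - 1)$ attains its largest magnitude $\alpha_{b}^{avg}\, e^{-2}$ at $c = 2\beta_{b}^{avg}$. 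Comparing $\alpha_{b}^{avg}\, e^{-2}$ against $\gamma_{b}^{avg}$ shows that $g'$ either stays nonnegative everywhere (so $g$ is monotone increasing and the minimum is the left endpoint) or dips below zero on a single bounded interval, producing an increasing--decreasing--increasing shape with exactly one interior local maximum and one interior local minimum. Hence the only interior candidate for the global minimum is the larger root $c^{*}$ of $g'(c) = 0$.

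I would then conclude that over $[0.05, 16]$ the global minimum is achieved at $c^{*}$ whenever $c^{*}$ is feasible and undercuts both endpoints, and otherwise at a boundary point --- precisely the claimed dichotomy between the relative-minimum point and the boundary points. This reduces the continuous search to comparing a handful of candidates instead of scanning all $320$ discrete core settings.

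The hard part will be the stationarity equation $g'(c) = 0$, which is transcendental and admits no elementary closed form (its root is a Lambert-$W$ expression). I expect to avoid solving it explicitly and instead rely on the monotonicity and sign analysis above, which pins down that there is \emph{at most one} interior local minimum and thus converts the claim into a finite-candidate comparison. A secondary subtlety is reconciling the continuous relaxation used in the proof with the discrete $0.05$-core grid; I would address this by rounding $c^{*}$ to its neighboring feasible values and comparing their costs together with the endpoints.
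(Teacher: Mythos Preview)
Your proposal is correct and follows essentially the same route as the paper: substitute the CPU latency model into the cost formula with $m=0$, reduce to a one-variable function of $c$, differentiate to obtain $g'(c)=\alpha_{b}^{avg}e^{-c/\beta_{b}^{avg}}(1-c/\beta_{b}^{avg})+\gamma_{b}^{avg}$, argue there is at most one interior local minimum, and then compare that stationary point against the two endpoints. If anything, your sign analysis of $g'$ (checking $g'(0^{+})$, the limit at infinity, and the extremum of the exponential term at $c=2\beta_{b}^{avg}$) is more careful than the paper's, which simply asserts ``$C^{\mathcal{X}}$ has at most one relative minimum point'' and moves on to the three-candidate comparison.
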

\begin{proof}
    The proof can be found in Appendix~\ref{sec:appendix-theorem1}~\cite{harmonybatch}.
\end{proof}

\begin{theorem}\label{thm-gpu}
    Given an amount of GPU memory $m^{\mathcal{X}}$, the minimum cost of GPU function provisioning can be achieved if the following condition holds. 
    \begin{equation}
        \lfloor r^{\mathcal{X}} \cdot T^{\mathcal{X}} \rfloor + 1 = b^{\mathcal{X}}.
        \label{eq-cons-eq-rt-b}
    \end{equation}
\end{theorem}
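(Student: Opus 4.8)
The plan is to show that, with the GPU memory $m^{\mathcal{X}}$ held fixed, the per-request cost $C^{\mathcal{X}}$ is a strictly decreasing function of the batch size $b^{\mathcal{X}}$, so that minimizing cost is equivalent to driving $b^{\mathcal{X}}$ to its largest feasible value; I would then argue that this largest feasible value is precisely the point where Constraint~\eqref{eq-cons-rtb} becomes tight, which is Eq.~\eqref{eq-cons-eq-rt-b}.

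First I would make the GPU cost explicit. Setting $c=0$ in Eq.~\eqref{eq-cost} and substituting the average GPU latency from Eqs.~\eqref{eq-gpu-lat} and~\eqref{eq-gpu-avg-lat}, the memory factor $m$ cancels, yielding
\[
C^{\mathcal{X}} = M_{max}\,\xi_1\,K_2 + \frac{M_{max}\,\xi_2\,K_2 + K_3}{b^{\mathcal{X}}}.
\]
Because $M_{max},\xi_1,\xi_2,K_2,K_3>0$, the first term is constant and the second is strictly decreasing in $b^{\mathcal{X}}$; hence $C^{\mathcal{X}}$ strictly decreases as $b^{\mathcal{X}}$ grows. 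Therefore, for fixed $m^{\mathcal{X}}$, the cost-minimizing choice is the maximum admissible batch size.

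The key step is to pin down that maximum. Among the feasibility constraints, the one characterized by the theorem is Constraint~\eqref{eq-cons-rtb}, whose right-hand side $\lfloor r^{\mathcal{X}} T^{\mathcal{X}} \rfloor + 1$ itself depends on $b^{\mathcal{X}}$: a larger batch increases $L_{max}^{g}$ via Eq.~\eqref{eq-gpu-max-lat}, which decreases each timeout $t^{w}=s^{w}-L_{max}^{g}$, and hence decreases the equivalent timeout $T^{\mathcal{X}}$ through Eq.~\eqref{eq-t-exp}. I would therefore establish two monotonicities: the left-hand side $b^{\mathcal{X}}$ is trivially increasing, whereas the right-hand side $\lfloor r^{\mathcal{X}} T^{\mathcal{X}} \rfloor + 1$ is non-increasing in $b^{\mathcal{X}}$ by chaining the dependencies $b^{\mathcal{X}} \to L_{max}^{g} \to t^{w} \to T^{\mathcal{X}}$. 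Consequently the feasible set $\{\,b^{\mathcal{X}} : b^{\mathcal{X}} \le \lfloor r^{\mathcal{X}} T^{\mathcal{X}} \rfloor + 1\,\}$ is downward closed, its largest element is the crossover point, and at that maximizer the inequality holds with equality, establishing Eq.~\eqref{eq-cons-eq-rt-b}.

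I expect the main obstacle to be the self-referential nature of Constraint~\eqref{eq-cons-rtb}, where the admissible batch size appears on both sides through $T^{\mathcal{X}}$, compounded by the floor operator, which makes the bound a non-increasing step function rather than a continuous curve. The argument hinges on verifying that this chained dependence cannot make the right-hand side increase with $b^{\mathcal{X}}$, and then phrasing the conclusion in terms of the largest integer $b^{\mathcal{X}}$ satisfying the inequality: increasing it by one violates feasibility, while the strictly decreasing cost rules out any smaller choice, so the optimum is attained exactly where the constraint is tight.
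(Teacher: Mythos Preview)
Your proposal is correct and follows essentially the same route as the paper's proof: derive the closed-form GPU cost showing the $m$-dependence cancels and $C^{\mathcal{X}}$ is strictly decreasing in $b^{\mathcal{X}}$, then argue that the right-hand side of Constraint~\eqref{eq-cons-rtb} is non-increasing in $b^{\mathcal{X}}$ through the chain $b^{\mathcal{X}}\to L_{max}^{g}\to t^{w}\to T^{\mathcal{X}}$, so the optimum is the largest feasible batch, where the constraint is tight. Your treatment of the monotonicity and the floor-induced step behavior is in fact slightly more careful than the paper's sketch (and your cost expression correctly carries the $K_2$ factor on the $M_{max}\xi_2$ term).
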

\begin{proof}
    The proof can be found in Appendix~\ref{sec:appendix-theorem2}~\cite{harmonybatch}.
\end{proof}


Based on Theorem~\ref{thm-cpu} and Theorem~\ref{thm-gpu}, we simply adopt the \emph{binary search} method in \Name{} to fast identify the cost-efficient function provisioning plan (\emph{i.e.,} the vCPU cores for the CPU function and the batch size for the GPU function, respectively). 
Accordingly, \Name{} can minimize the inference budget, while guaranteeing the latency SLOs for an application group $\mathcal{X}$.

\textbf{Remark.} The complexity of Alg.~\ref{config-alg-group} is in the order of $\mathcal{O}(|\mathcal{W}| \cdot M_{max} \cdot \log_{2}B_{max})$, where $|\mathcal{W}|$ is the number of the applications. $M_{max}$ and $B_{max}$ are the maximum GPU memory and the maximum batch size, respectively.
This is attributed to the relatively small batch size considered by the CPU function, which makes the complexity of the function provision strategy mainly depend on the size of the search space of GPU function resource provisioning. The computation overhead of Alg.~\ref{config-alg-group} is well contained, which will be evaluated in Sec.~\ref{sec:eval-overhead}.

\subsection{Implementation of \emph{\Name{}} Prototype}
\label{sec:algorithm-prototype}

The \Name{} prototype is implemented on the Alibaba Compute Function platform~\cite{Ali_function} with over $1,400$ lines of Python codes, which are publicly available on GitHub. 
We implement four representative DNN inference workloads based on 
PyTorch\footnote{https://pytorch.org} v1.13.0 and ONNX Runtime\footnote{https://onnxruntime.ai} v1.16.1. We use the Alibaba FC-Open Python SDK~\cite{Ali_function} to update function resources.
We deploy the \Name{} on a dedicated cloud instance, which receives DNN inference requests from a set of user applications. We first set up a request queue for each application group. We then batch the inference requests in the queue and finally route them to the provisioned CPU/GPU functions. To handle the request arrival variations, \Name{} can be \emph{periodically} executed to provision functions for DNN inference workloads. \Name{} mainly determines the batching and function resource configurations. The batching configurations are sent to the \emph{batch manager} to control the request queue, while the function resource configurations are sent to the serverless platform to \emph{vertically} scale up or scale down functions.


\section{Performance Evaluation}
\label{sec:evaluation}

In this section, we evaluate \Name{} by conducting a set of prototype experiments with four representative DNN models (listed in Table~\ref{tab:evaluation-workloads}) deployed on Alibaba Cloud Function Compute~\cite{Ali_function}. We seek to answer the questions as follows.
\begin{itemize}
    \item \textbf{Accurary:} Can our model in \Name{} accurately predict the DNN inference latency with heterogeneous serverless functions? (Sec.~\ref{sec:eval-predict})
    \item \textbf{Effectiveness:} Can our function provisioning strategy in \Name{} provide predictable performance for multi-SLO DNN inference while saving the monetary cost? (Sec.~\ref{sec:eval-effectiveness})
    \item \textbf{Overhead:} How much runtime overhead does \Name{} practically bring? (Sec.~\ref{sec:eval-overhead})
\end{itemize}

\begin{figure*}[!t]   
	\begin{minipage}[t]{0.50\linewidth}
  	\centerline{
  		\subfigure[VideoMAE]{\includegraphics[width=0.49\linewidth]{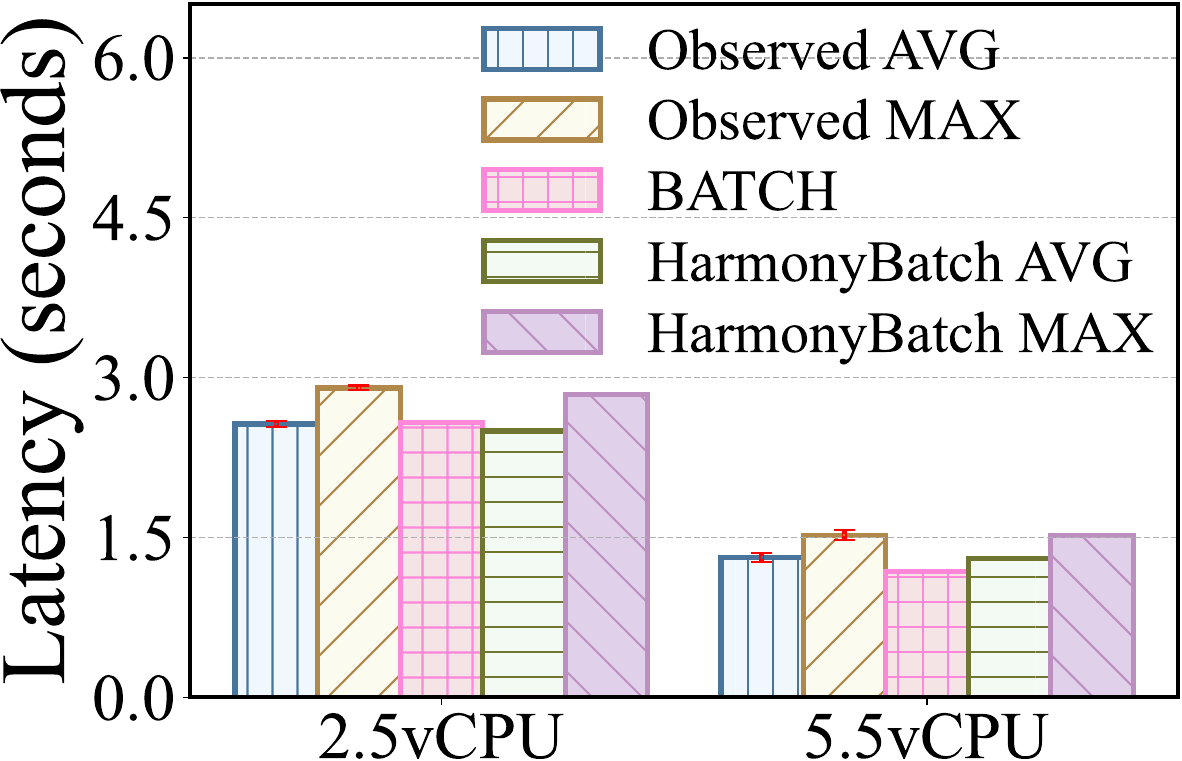}\label{evaluation-accuracy-vediomae}}
  		\subfigure[VGG-19]{\includegraphics[width=0.49\linewidth]{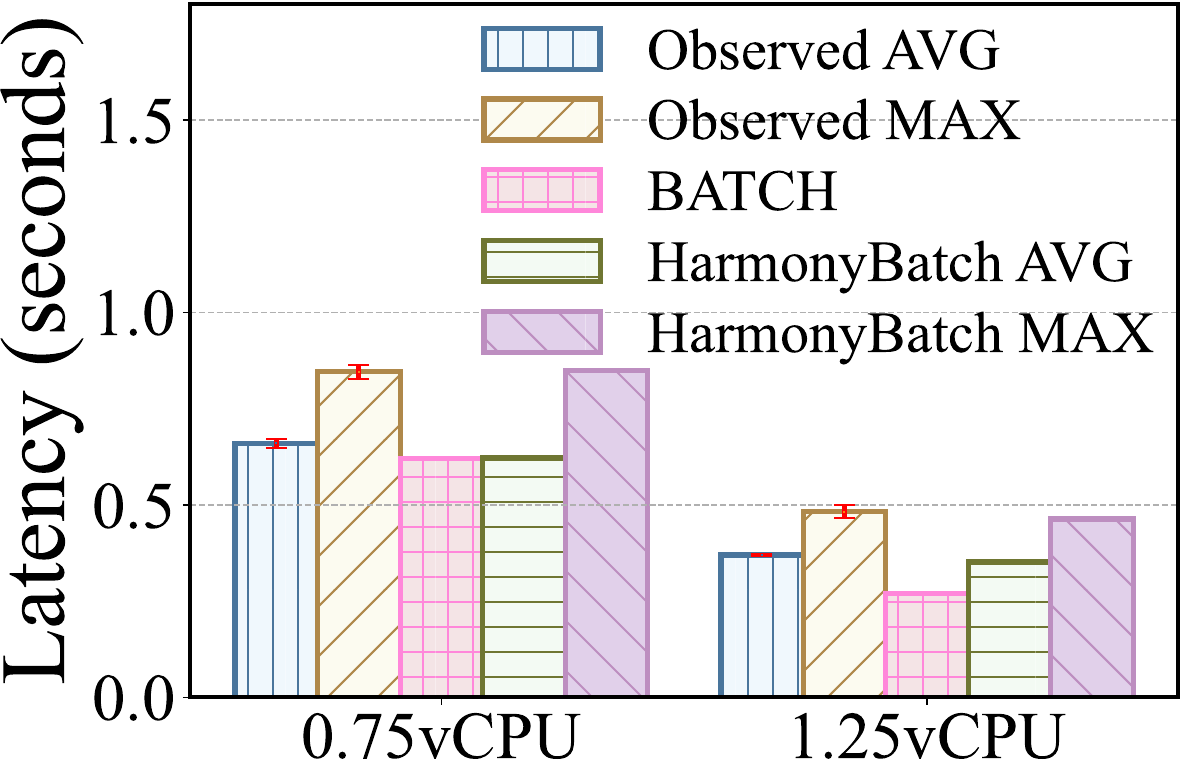}\label{evaluation-accuracy-vgg19}}}
	\caption{Comparison of the observed and predicted inference latency of VideoMAE and VGG-19 executed on CPU functions.}
	\label{evaluation-accuracy-cpu}
	\end{minipage} \hspace{+4pt}
 	\begin{minipage}[t]{0.50\linewidth}
  	\centerline{
  		\subfigure[BERT]{\includegraphics[width=0.49\linewidth]{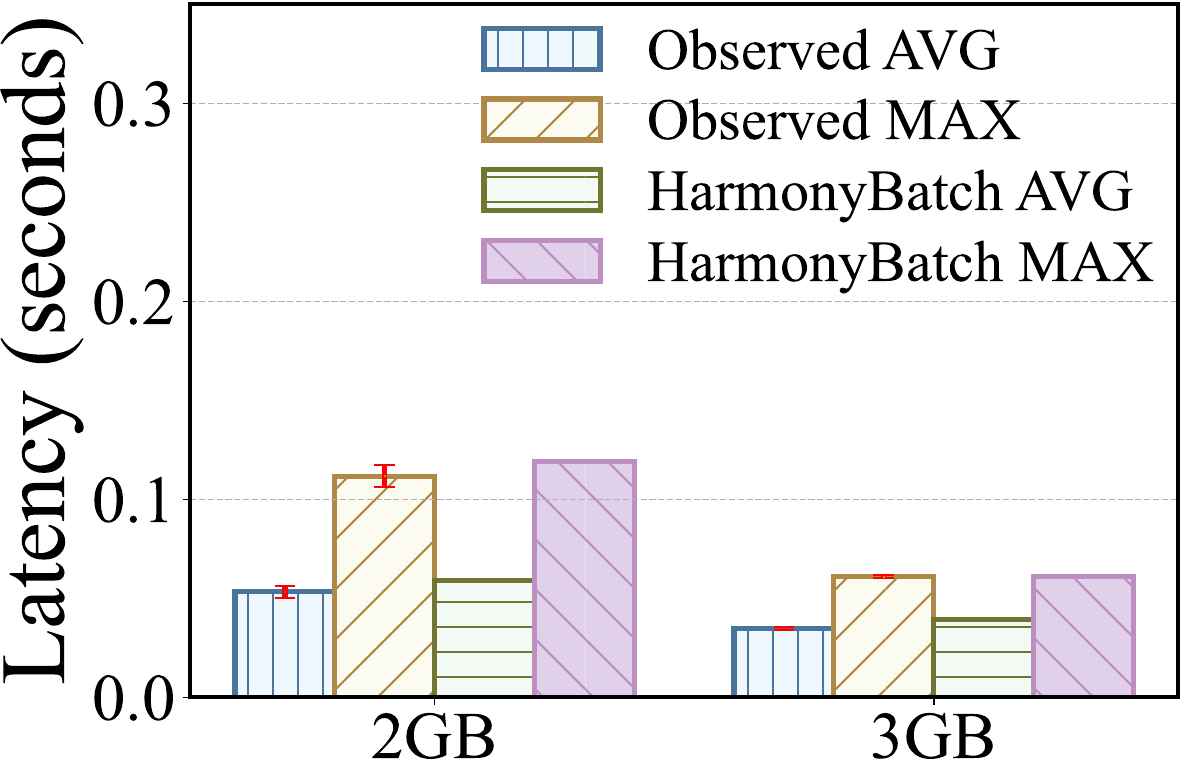}\label{evaluation-accuracy-BERT-gpu}}
            \subfigure[GPT-2]{\includegraphics[width=0.49\linewidth]{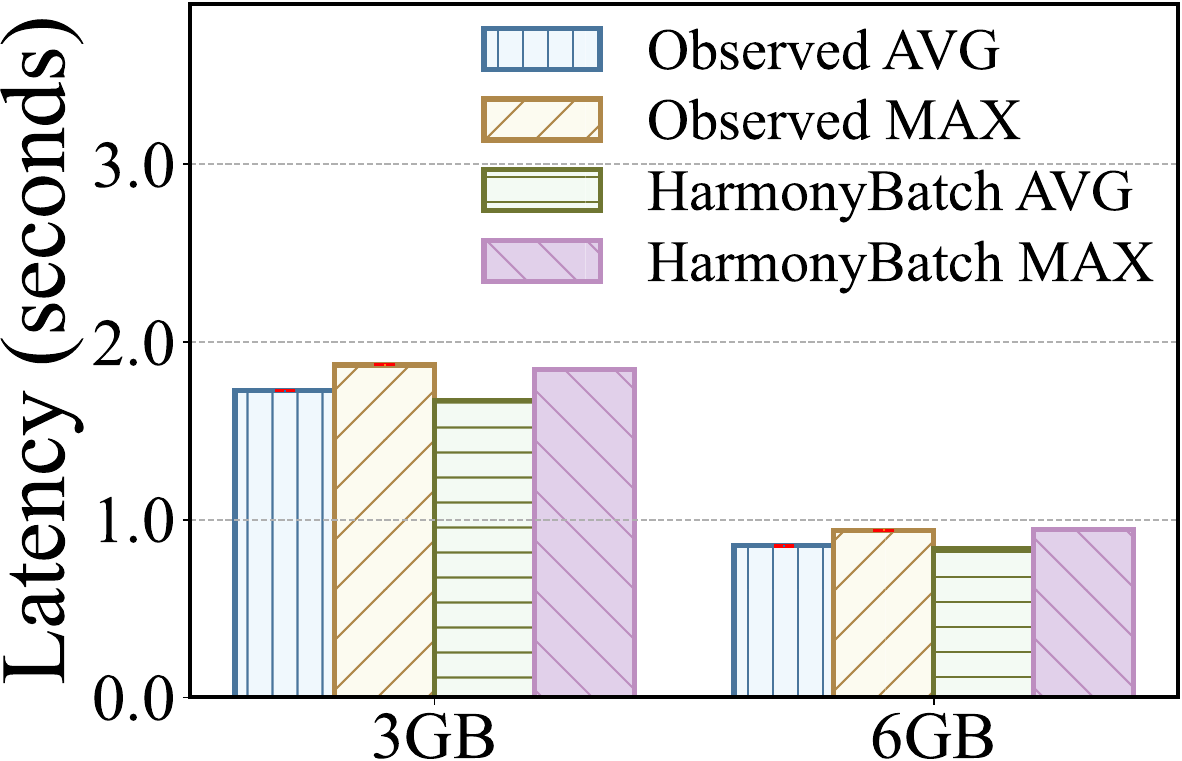}\label{evaluation-accuracy-gpt2-gpu}}}
	\caption{Comparison of the observed and predicted latency of BERT and GPT-2 executed on GPU functions. BATCH does support GPU functions.}
	\label{evaluation-accuracy-gpu}
	\end{minipage}
	\vspace{-12pt}
\end{figure*}

\begin{figure*}\vspace{+6pt}
	\begin{minipage}[t]{0.32\linewidth}
		\centering
		\includegraphics[width=2.3in]{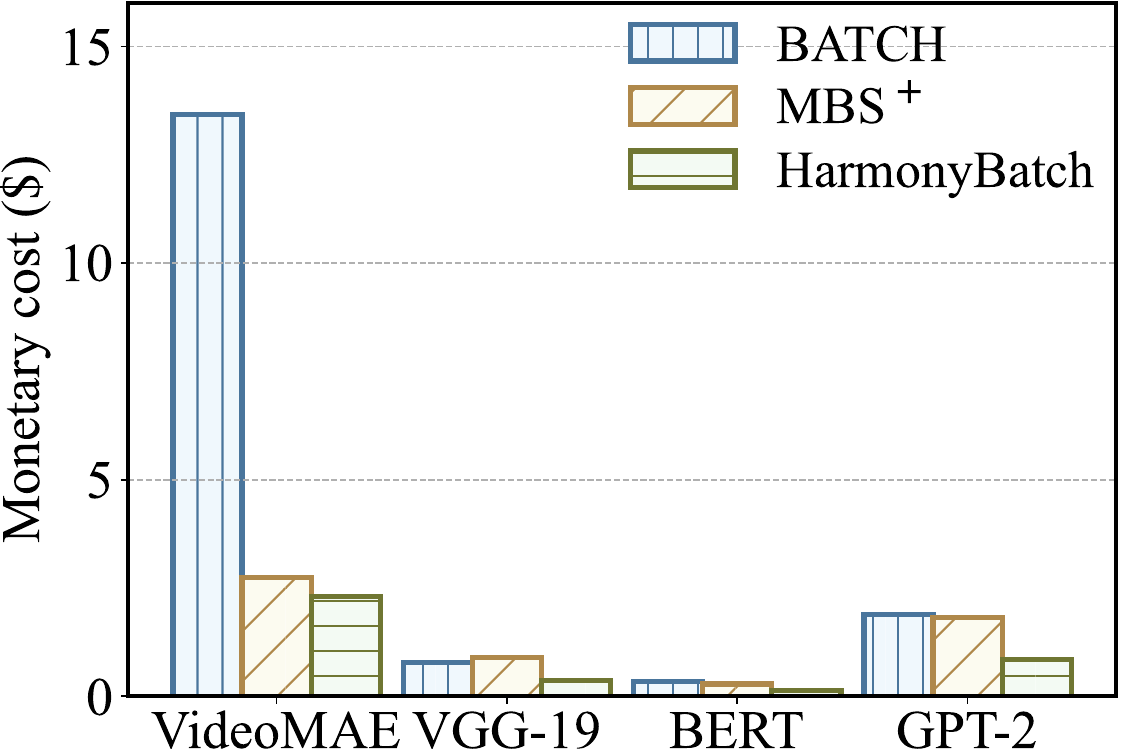}
		\caption{Comparison of the monetary cost of various function resource provisioning strategies for representative DNN models.}
		\label{evaluation-cost-trace}
	\end{minipage}\hspace{+4pt}
        \begin{minipage}[t]{0.32\linewidth}
		\centering
            \includegraphics[width=2.3in]{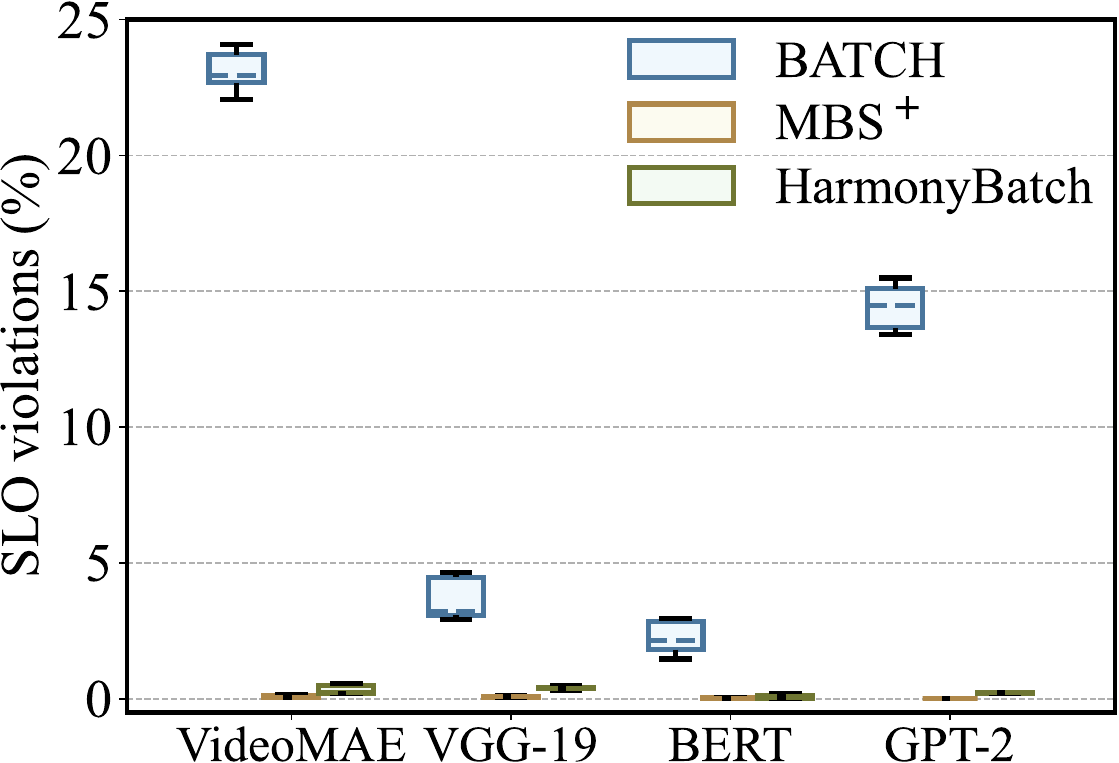}
		\caption{Comparison of the SLO violations of various function resource provisioning strategies for representative DNN models.}
		\label{evaluation-slo}
	\end{minipage}\hspace{+4pt}
	\begin{minipage}[t]{0.32\linewidth}
		\centering
  		\includegraphics[width=2.26in]{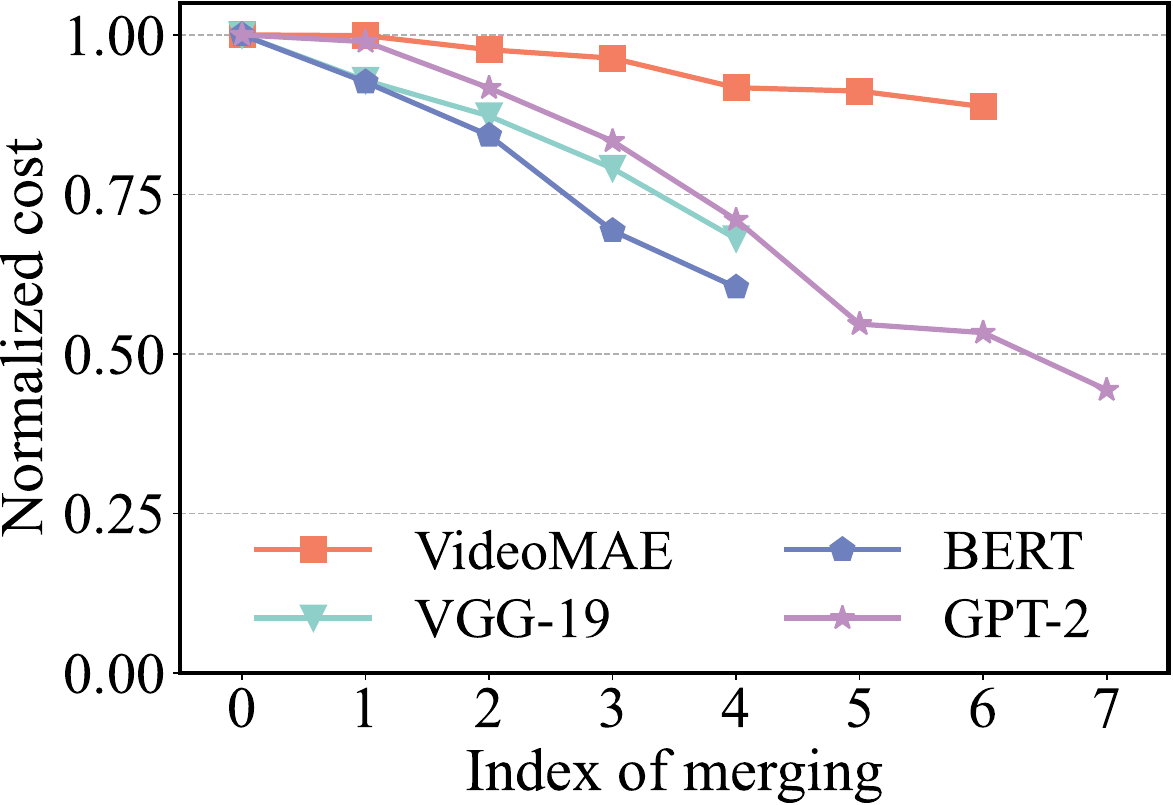}
		\caption{Normalized monetary cost of four representative DNN models by increasing the merging index of application groups over time.}
		\label{evaluation-merging}	
	\end{minipage}
	\vspace{-12pt}
\end{figure*}

\subsection{Experimental Setup}
\label{sec:eval-setup}

\textbf{Configurations of DNN inference workloads.} We select four representative DNN models as listed in Table~\ref{tab:evaluation-workloads} and replay the real-world trace from Azure Function~\cite{shahrad2020serverless} to evaluate the effectiveness of \Name{}. The DNN models are selected from diverse fields, \emph{i.e.,} VideoMAE~\cite{tong2022videomae} in video analytics, VGG-19~\cite{simonyan2014very} in image processing, BERT~\cite{devlin2019BERT} and GPT-2~\cite{radford2019language} in NLP. We use several widely-used datasets including Kinetics-400\footnote{https://www.deepmind.com/open-source/kinetics},
ImageNet\footnote{https://image-net.org/challenges/LSVRC/2017},
Wikipedia\footnote{https://en.wikipedia.org/wiki/English\_Wikipedia} and ShareGPT\footnote{https://sharegpt.com} for serving the four models above, respectively.

\textbf{Configurations of serverless functions.} We deploy our inference models in China Shanghai region of Alibaba Cloud Function Compute~\cite{Ali_function}. We adopt an \texttt{\Instance{}} ECS instance to deploy \Name{}. During the period of our experiments (Nov. 2023), the unit price of vCPU cores is $K_1 = 1.3e^{-5}$\$/vCPU$\cdot$s, and the unit price of GPU memory is $K_2 = 1.5e^{-5}$\$/GB$\cdot$s, as well as the constant unit cost of a function invocation is $K_3 = 1.3e^{-7}$\$.

\begin{table}[!t]\vspace{+0pt}
\renewcommand{\arraystretch}{1.5}
\vspace{+5pt}
\centering \caption{DNN inference workloads deployed in our experiments.} 
\label{tab:evaluation-workloads}\vspace{-0pt}
\resizebox{1.0\columnwidth}{!}{
\large
\begin{tabular}{c|cccc}
\toprule[1.2pt]
Workloads & VideoMAE & VGG-19 & BERT & GPT-2 \\
\midrule[1.2pt]
Framework & PyTorch & PyTorch & Onnx Runtime & PyTorch \\ \hline
Domains & Video Analytics & Image Processing & NLP & NLP \\ \hline
Datasets & Kinetics-400 & ImageNet & Wikipedia & ShareGPT \\
\bottomrule[1.2pt]
\end{tabular}
}
\vspace{-10pt}
\end{table}

\textbf{Baselines and metrics.} We compare \Name{} with two baselines: (1) BATCH~\cite{ali2020batch}: It leverages multi-variable parametric regression to model the latency, and separately provisions CPU functions only for each application using an exhaustive search. 
(2) MBS$^+$: 
It is the extended MBS~\cite{ali2022optimizing}, which employs Bayesian-based provisioning to distribute requests into several groups evenly. We incorporate our performance model into MBS$^+$ to support heterogeneous function provisioning for application groups. 
We focus on three metrics: \emph{SLO violations}, \emph{monetary cost}, and \emph{runtime overhead}.

\subsection{Validating Inference Performance Model of \emph{\Name{}}}
\label{sec:eval-predict}

\textbf{Can \Name{} well predict the DNN inference latency?} 
We first evaluate the latency of VideoMAE and VGG-19 on the CPU function by setting the batch size as $1$.
As depicted in Fig.~\ref{evaluation-accuracy-cpu}, \Name{} can well predict the average and maximum inference latency of VideoMAE and VGG-19 with the prediction error ranging from $0.2\%$ to $6.1\%$.
In contrast, BATCH~\cite{ali2020batch} poorly predicts the DNN inference latency as it treats the model latency as a deterministic distribution with the prediction error ranging from $0.7\%$ to $43.0\%$.
We next evaluate the inference latency of BERT and GPT-2 on the GPU function by setting the batch size as $8$.
As depicted in Fig.~\ref{evaluation-accuracy-gpu}, \Name{} exhibits strong predictive capabilities, accurately predicting the average and maximum inference latency of BERT and GPT-2 with prediction errors ranging from $0.1\%$ to $11.4\%$. In particular, the accuracy in predicting the maximum latency of BERT achieves a low prediction error of $0.1\%$, which can be attributed to the integration of GPU time-slicing scheduling mechanism into our inference performance model.

\subsection{Effectiveness of \emph{\Name{}} Function Provisioning}
\label{sec:eval-effectiveness} 

\textbf{Can \Name{} guarantee the DNN inference performance while minimizing the monetary cost?} 
To evaluate the efficacy of \Name{}, we utilize $8$ applications for each DNN model ($32$ applications in total). Specifically, we set the application SLOs between $0.2$ and $1.0$ seconds with an interval of $0.1$ seconds for VGG-19 and BERT. Also, we set the application SLOs between $1.0$ and $2.4$ seconds with an interval of $0.2$ seconds for VideoMAE and GPT-2.

As shown in Fig.~\ref{evaluation-cost-trace}, \Name{} can save the cost by up to \CostReduce{} compared with the two baselines. 
Specifically, BATCH can hardly reduce the cost with predictable performance, because it only batches requests for individual applications. Meanwhile, it overlooks the significant cost reduction of GPU functions, particularly for resource-intensive DNN models like VideoMAE.
In contrast, MBS$^+$ and \Name{} implement a heterogeneous serverless provisioning strategy, which leverages GPU functions to significantly enhance the advantages of batching, thereby gaining a substantial cost benefit.
Furthermore, \Name{} achieves cost reductions of $16.1\%$, $59.6\%$, $54.5\%$ and $52.9\%$ for the four workloads compared to MBS$^+$. This is attributed to the fact that MBS$^+$, with its evenly distributed requests, commonly aggregates inference requests with significant differences in SLOs into application groups. Moreover, it distributes the inference requests from the application with high request arrival rates to several functions, resulting in smaller batch sizes.

\begin{figure*}
	\begin{minipage}[t]{0.62\linewidth}
		\centering
		\includegraphics[width=4.4in]{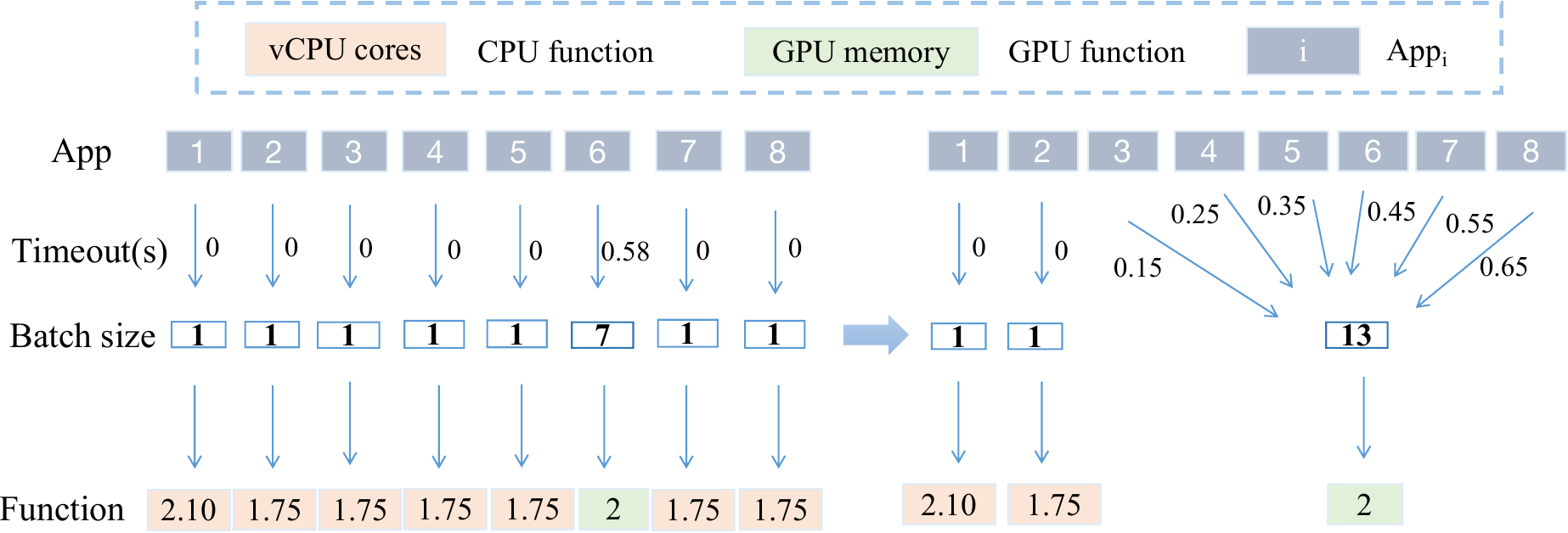}
		\caption{Function provisioning plans of VGG-19 before and after the group merging.}
		\label{evaluation-before_after}\vspace{-12pt}
	\end{minipage}\hspace{+4pt}
	\begin{minipage}[t]{0.36\linewidth}
		\vspace{-95pt}
		\centering
		\renewcommand{\arraystretch}{1.2}
		\resizebox{2.8in}{!}{
		\begin{tabular}{c|ccc}
		\toprule[1.0pt]
		Number of & \multirow{2}{*}{BATCH} & \multirow{2}{*}{MBS$^+$} & \multirow{2}{*}{\Name{}} \\
            applications & & & \\
		\midrule[1.0pt]
            $1$ & $34$ & $176$ & $2$ \\ \hline
            $6$ & $222$ & $3,497$ & $23$ \\ \hline
            $12$ & $393$ & $8,156$ & $38$ \\
		\bottomrule[1.0pt]
		\end{tabular}
		}
		\vspace{+6pt}
		\captionof{table}{Computation time (in milliseconds) of different strategies.}
		\label{tab:eval-overhead}\vspace{-12pt}
	\end{minipage}
\end{figure*}

Regarding the SLO violations as illustrated in Fig.~\ref{evaluation-slo}, BATCH can hardly provide predictable inference performance for DNN models, because it treats the inference request latency as a deterministic distribution, which leads to an excessively large SLO violations across large DNN models. In contrast, both \Name{} and MBS$^+$ can guarantee inference performance for all DNN models. Furthermore, \Name{} can obtain the function provisioning plans much faster than MBS$^+$ as elaborated in Sec.~\ref{sec:eval-overhead}.

\textbf{Can \Name{} reduce the monetary cost as the merging of application groups proceeds?} 
As depicted in Fig.~\ref{evaluation-merging}, \Name{} separately provisions function resources for each application at the initial time.
After $4$ to $7$ merging operations, the four DNN inference workloads achieve notable cost reduction by $13.1\%-62.4\%$. By taking VGG-19 as an example, it experiences a significant reduction in the monetary cost by up to $38.4\%$. The reason is that before the merging, the request rates of $7$ applications were low, making them be deployed on CPU functions. However, after the merging, $6$ applications are grouped and deployed on a GPU function with a large batch size, achieving a significant cost reduction while guaranteeing application SLOs.

We look into the adjustments on function provisioning plans made by \Name{} for VGG-19 during the two-stage merging process. As shown in Fig.~\ref{evaluation-before_after}, \Name{} provisions $7$ CPU functions and $1$ GPU function at first. After the merging process, \Name{} decreases the application groups as $3$ and optimizes the function provisioning to only $2$ CPU functions and $1$ GPU function. It directs $79.0\%$ of requests to the GPU function for DNN inference. Meanwhile, the batch size of the GPU function is increased from $7$ to $13$ after the merging process. As a result, \Name{} prioritizes assigning inference requests to GPU functions whenever feasible, with the aim of greedily creating large batches and thus significantly reducing the DNN inference budget.

\subsection{Runtime Overhead of \emph{\Name{}}}
\label{sec:eval-overhead}

We evaluate the runtime overhead of \Name{} including the workload profiling time and the computation time of Alg.~\ref{config-alg-group}. 
Specifically, the profiling time for obtaining model coefficients of VideoMAE~\cite{tong2022videomae}, VGG-19~\cite{simonyan2014very}, BERT~\cite{devlin2019BERT}, and GPT-2~\cite{radford2019language} on CPU functions are $13$, $5$, $11$, and $18$ minutes, respectively. The profiling time for each model listed in Table~\ref{tab:evaluation-workloads} is less than $1$ minute on GPU functions. 
In addition, the profiling time to obtain the minimum time slice $\tau$ is merely $0.1$ minutes which requires profiling only once. 

To evaluate the computation time of \Name{}, we provision a VGG-19 model with different numbers of applications from $1$ to $12$. As listed in Table~\ref{tab:eval-overhead}, the algorithm computation time of \Name{} is roughly linear to the number of applications, which is negligible as compared to the other two strategies. This is because we adopt a two-stage merging grouping strategy, which significantly reduces the complexity of the application grouping and function provisioning algorithms to $\mathcal{O}(|\mathcal{W}| \cdot M_{max} \cdot \log_{2}B_{max})$.
In particular, we evaluate the computation overhead of $\mathtt{funcProvision}$ for an application group.
Its computation time is still acceptable because we obtain function resource provision plan using the binary search method instead of an exhaustive search.

\section{Related Work}
\label{sec:related}

\textbf{Optimizing DNN inference with serverless functions.} To reduce the serverless inference budget, BATCH~\cite{ali2020batch} introduces batching inference requests in serverless platforms. MBS~\cite{ali2022optimizing} further optimizes the padding overhead by aggregating similar-size requests in a batch. 
To reduce the memory consumption, Tetris~\cite{li2022tetris} combines batching and concurrent executions in a serverless inference system. 
Different from prior works above, \Name{} aims to adequately configure the batch size for multi-SLO DNN inference workloads
by considering their SLOs explicitly.
INFless~\cite{yang2022infless} develops a serverless inference system with CPU and GPU resources by unifying their computing power using the floating point operations per second (FLOPS) metric. In contrast, \Name{} builds an analytical model for the inference performance and cost of public heterogeneous functions.
Moreover, \Name{} can benefit from several recent DNN inference optimizations such as scalability improvements (\emph{e.g.,} MArk~\cite{zhang2019mark}, AsyFunc~\cite{pei2023asyfunc}) and fine-grained selective batching techniques (\emph{e.g.,} Orca~\cite{yu2022orca}).


\textbf{Resource provisioning of serverless functions.} To optimize the function provisioning, AWS Lambda Power Tuning~\cite{aws-lambda-power-tuning} adopts workload profiling with all possible memory configurations to identify the optimal memory allocation for functions, which brings heavy overhead.
Sizeless~\cite{eismann2021sizeless} adopts a machine learning model to predict the inference execution time, which brings a non-negligible model training cost.
To mitigate such a training cost, COSE~\cite{akhtar2020cose} and MBS~\cite{ali2022optimizing} employ Bayesian Optimization to identify the cost-effective resource provisioning plan, which highly depends on initial sampling and seeding.
In contrast, \Name{} identifies a cost-effective function provisioning plan using an inference performance model.
ElasticFlow~\cite{gu2023elasticflow} proposes a greedy algorithm to allocate GPU resources on serverless platforms dynamically which only works for DNN training workloads. 


\textbf{Performance modeling of DNN inference.} To model DNN inference performance, BARISTA~\cite{bhattacharjee2019barista} employs the maximum likelihood estimation approach to obtain the distribution of inference latency. BATCH~\cite{ali2020batch} leverages multi-variable parametric regression to model the inference latency as a deterministic distribution. Instead of identifying the latency distribution, \Name{} develops an analytical model of the average and maximum inference latency, which significantly reduces workload profiling overhead.
A recent work named iGniter~\cite{xu2022igniter} emphasizes performance interference for GPU-based inferences. INFless~\cite{yang2022infless} adopts a lightweight Combined Operator Profiling method to predict inference latency with GPU spatial-sharing. 
In contrast, we focus on modeling inference latency on public heterogeneous functions by explicitly considering the GPU time-slicing scheduling mechanism. 


\section{Conclusion and Future Work}
\label{sec:conclusion}

This paper presents the design and implementation of \Name{}, a cost-efficient resource provisioning framework that achieves predictable performance for multi-SLO DNN inference with heterogeneous serverless functions. 
\Name{} consists of a lightweight performance and cost model of DNN inference on heterogeneous functions and a two-stage merging strategy,
which judiciously batches the multi-SLO DNN inference requests into application groups and provisions each group with adequate CPU or GPU function resources. Prototype experiments on Alibaba Cloud Function Compute demonstrate that \Name{} can deliver predictable DNN inference performance on serverless platforms while saving the monetary cost by up to \CostReduce{} compared to the state-of-the-art methods, yet with acceptable runtime overhead.

We plan to extend \Name{} in two directions: (1) supporting large model inference by leveraging model partitioning, and (2) supporting other public serverless platforms (\emph{e.g.,} AWS Lambda) when GPU functions are enabled.


\bibliographystyle{IEEEtran}
\bibliography{ref}

\appendix

\subsection{Derivation of Equivalent Batching Timeout}
\label{sec:appendix-eq}

The probability that the first request arriving in the buffer belongs to App$_i$ ($i=1$ or $2$) is $\eta_i = \frac{r_i}{r_1 + r_2}$, where $r_1$ and $r_2$ represent the arrival rates for App$_1$ and App$_2$, respectively. If there are no subsequent requests after the first one, the request waiting time becomes $T_i$. However, if another request (belongs to App$_j$, $j=1$ or $2$) arrives in the buffer before $T_i$ elapses, the request waiting time is updated to $\min(T_i, t+T_j)$, where $t$ is the time at which the App$_j$ request arrives.

More specifically, we assume that the buffer capacity limit is infinite and $T_1$ is shorter than $T_2$. We consider the following scenario: if the first request in the buffer is from App$_2$ and a request from App$_1$ arrives within a time frame of $T_2-T_1$, the request waiting time is updated to $t+T_1$, where $t$ represents the time at which the App$_1$ request arrives. 
However, the request waiting time remains to be $T_2$, if no request from App$_1$ arrives within $T_2-T_1$.

Based on the above, App$_1$ follows a Poisson distribution with the request rate $r_1$. The equivalent batching timeout of the application group $\mathcal{X}$ with App$_1$ and App$_2$ can be calculated by
\begin{equation}
\begin{split}
    T^{\mathcal{X}}
    = & \eta_1 \cdot T_1 + \eta_2 \cdot \big(\int_{0}^{T_2-T_1}f(t) \cdot (t+T_1) dt +\\
    & T_2 \cdot (1 - \int_{0}^{T_2-T_1}f(t)dt)\big)\\
    = & T_1 + \eta_2 \cdot \frac{1-\exp(-r_1\cdot(T_2-T_1))}{r_1},
\end{split}
\end{equation}
where $f(t) = r_1 \cdot \exp(-r_1 \cdot t)$ represents the probability density function indicating the likelihood of a request from App$_1$ arriving at time $t$.

\subsection{Proof of Theorem 1}
\label{sec:appendix-theorem1}

\begin{proof}
By substituting Eq.~\eqref{eq-cpu-lat} -- \eqref{eq-t-exp} into Eq.~\eqref{eq-cost}, the average monetary cost of a group $\mathcal{X}$ configured with CPU functions can be reduced to
\begin{equation}
    C^{\mathcal{X}} = \frac{1}{b^{\mathcal{X}}} \cdot [(\alpha_{b^{\mathcal{X}}}^{avg}\cdot \exp\big(-\frac{c}{\beta_{b^{\mathcal{X}}}^{avg}}\big) + \gamma_{b^{\mathcal{X}}}^{avg}) \cdot c \cdot K_1 + K_3].
\end{equation}
Based on the equation above, $C^{\mathcal{X}}$ has at most one relative minimum point, which can obtained by solving the equation of $(C^{\mathcal{X}})^{'} = 0$ using the binary search method.
The derivative of $C^{\mathcal{X}}$ with respect to $c$ is given by
\begin{equation}
    (C^{\mathcal{X}})^{'} = \frac{K_1}{b^{\mathcal{X}}} \cdot [\alpha_{b^{\mathcal{X}}}^{avg} \cdot (1-\frac{c}{\beta_{b^{\mathcal{X}}}^{avg}}) \cdot \exp\big(-\frac{c}{\beta_{b^{\mathcal{X}}}^{avg}}\big) + \gamma_{b^{\mathcal{X}}}^{avg}].
\end{equation}
Based on the above, the minimum value of $C^{\mathcal{X}}$ can only be obtained at the relative minimum point $c_0$ or boundary points $c_{min}$ and $c_{max}$.
Accordingly, the minimal monetary cost can be achieved with the allocated vCPU cores set as
\begin{equation}
    c^{*} = \arg \min C^{\mathcal{X}}(c), \quad c \in \{c_0, c_{min}, c_{max}\}.
\end{equation}

\end{proof}

\subsection{Proof of Theorem 2}
\label{sec:appendix-theorem2}

\begin{proof}
By substituting Eq.~\eqref{eq-cpu-lat} -- \eqref{eq-t-exp} into Eq.~\eqref{eq-cost}, the average monetary cost of a group $\mathcal{X}$ configured with GPU functions can be obtained by
\begin{equation}
    C^{\mathcal{X}} = M_{max} \cdot K_2 \cdot \xi_1 + \frac{M_{max} \cdot \xi_2 + K_3}{b^{\mathcal{X}}}.
\end{equation}
According to the equation above, the cost $C^{\mathcal{X}}$ can be minimized, as $b^{\mathcal{X}}$ reaches its maximum value.
We set $t^w$ as the maximum value (\emph{i.e.,} $s^{w} - L_{max}^{t}$) that satisfies the Constraint~\eqref{eq-cons-SLO}. As $b^{\mathcal{X}}$ increases, the corresponding value of $t^w$ for applications in a group $\mathcal{X}$ decreases, leading to a reduced equivalent batching timeout $T^{\mathcal{X}}$, which is likely to violate Constraint~\eqref{eq-cons-rtb}.
Accordingly, the maximum value of $b^{\mathcal{X}}$ is achieved when $\lfloor r^{\mathcal{X}} \cdot T^{\mathcal{X}} \rfloor + 1 = b^{\mathcal{X}}$, which can be determined efficiently through the binary search method.
\end{proof}

\end{document}